\author{Zsuzsanna Lipt{\'a}k}{Department of Computer Science, University of Verona, Italy}{zsuzsanna.liptak@univr.it}{https://orcid.org/0000-0002-3233-0691}{}
\author{Francesco Masillo}{Department of Computer Science, University of Verona, Italy}{francesco.masillo@univr.it}{https://orcid.org/0000-0002-2078-6835}{}
\author{Gonzalo Navarro}{CeBiB \& Department of Computer Science, University of Chile, Chile}{gnavarro@dcc.uchile.cl}{https://orcid.org/0000-0002-2286-741X}{}
\title{Maintaining the cycle structure of dynamic permutations} 
\authorrunning{Zs.\ Lipt\'ak, F.\ Masillo, and G.\ Navarro} 
\keywords{permutations, cycle structure, binary search trees, splay trees, dynamic data structures}
\date{February 2023}
\tikzset{near start abs/.style={xshift=1cm},
	node/.style={circle,draw},
	nodeone/.style={circle,draw,gray, line width=0.7mm},
	nodered/.style={circle,draw,red, line width=0.7mm}}
\newcommand{\Oh}{{\cal O}}
\newcommand{\FST}{{\text{FST}}}
\newcommand{\dist}{\mathop{\text{dist}}}
\begin{document}

\maketitle

\begin{abstract}
We present a new data structure for maintaining dynamic permutations, which we call a {\em forest of splay trees (FST)}. The FST allows one to efficiently maintain the cycle structure of a permutation $\pi$ when the allowed updates are transpositions. The structure stores one conceptual splay tree for each cycle of $\pi$, using the position within the cycle as the key. Updating $\pi$ to $\tau\cdot\pi$, for a transposition $\tau$, takes $\Oh(\log n)$ amortized time, where $n$ is the size of $\pi$. The FST computes any $\pi(i)$, $\pi^{-1}(i)$, $\pi^k(i)$ and $\pi^{-k}(i)$, in $\Oh(\log n)$ amortized time. Further, it supports cycle-specific queries such as determining whether two elements belong to the same cycle, flip a segment of a cycle, and others, again within $\Oh(\log n)$ amortized time.
\end{abstract}


\newpage

\setcounter{page}{1}

\section{Introduction}\label{sec:introduction} 

Permutations play a central role in many applications in mathematics and computer science, spanning from combinatorics and group theory to algorithms for random generation and computational biology (most notably, genome rearrangements). Permutations are also of fundamental importance in text indexing: a {\em text index} is a dedicated data structure that allows fast execution of pattern matching queries and other text processing tasks. 
All current compressed and non-compressed text indexes, such as suffix arrays (SAs)~\cite{ManberM93}, compressed suffix arrays (CSAs)~\cite{GrossiV05}, compressed suffix trees (CSTs) \cite{GNPjacm19,FMNtcs09}, the FM-index~\cite{FM05}, the RLFM-index~\cite{MakinenN05}, the $r$-index~\cite{GNPjacm19}, and the LZ-index~\cite{KreftN13}, to name a few, have at their core some permutation (see, e.g., \cite[Ch.\ 11 \& Section\ 13.3]{NavarroBook}).
The best known of these, the SA, is itself a permutation of $n$; other permutations used by text indexes include those known as $\Psi$, LF (a.k.a.\ standard permutation), or $\phi$. 
Another prominent example where permutations are central is genome rearrangements in computational biology~\cite{SM97,BookGenomeRearrangments}, which consists of {\em sorting} a permutation: Given a set of legal operations and a permutation $\pi$, find a shortest sequence of operations that transforms $\pi$ into the identity.

Permutations, in most situations, are simply stored as an array of integers, allowing constant-time access to each entry $\pi(i)$ of the permutation $\pi$. Sometimes one needs to take the inverse or powers of the permutation, which can be done efficiently with little additional space \cite{MRRR12,BN13,BCGNN14}. Other queries of interest, such as determining if two elements belong to the same cycle, may also be supported efficiently. There is not much support, instead, to let $\pi$ change between queries. In this case one needs a {\em dynamic data structure} that can be efficiently updated when some operation is applied to the permutation to convert it into another permutation $\pi'$. 

A simple workaround is to regard the permutation as a sequence and use a representation for dynamic sequences \cite{MunroN15}. Such a representation can support some simple queries on $\pi$, but in some applications (e.g., \cite{GiulianiLMR21}), the {\em cycle structure} of the permutation needs to be kept track of along updates, including the number and cardinality of the cycles, or deciding whether two elements are in the same cycle.

In this paper we fill this gap, presenting a data structure tailored for maintaining dynamic  permutations called FST (forest of splay trees), with special focus on their cycle structure. The FST supports arbitrary transpositions and flips in the cycles, while answering the described queries and some others, all in $\Oh(\log n)$ amortized time. Indeed, the FST is the only structure to date able to solve both those queries and updates within $o(n)$ time.


\subsection{Related work}\label{sec:related} 

Given a permutation $\pi$ on $n$, $\Oh(1/\epsilon)$-time 
support for accessing any $\pi(i)$, its inverse $\pi^{-1}(j)$, and in general positive and negative powers of $\pi$ (i.e., $\pi^k(i)$ or $\pi^{-k}(j)$ for any integer $k>0$), can be obtained with a representation using $(1+\epsilon) n\log n$ bits of space, for any $\epsilon>0$ \cite{MRRR12}.\footnote{By default, our logarithms are in base 2.} This is, in principle, an array storing the consecutive values not of $\pi$, but of its sequence of cycles seen as another permutation $\rho$. For example, if the cycle decomposition of $\pi$ is $(1,3)(2,6,5)(4)$, then $\rho = 1,3,2,6,5,4$. They provide $\Oh(1)$-time access to $\rho(i)$ and $\Oh(1/\epsilon)$-time access to $\rho^{-1}(j)$. They also store a bitvector of length $n$ marking with 1s the places where cycles end in $\rho$, with support to find, in constant time, the 1s that precede and follow any given position. In our example, this bitvector is $0,1,0,0,1,1$. This structure allows them to compute any positive or negative power of $\pi$ (including $\pi$ and $\pi^{-1}$) in time $\Oh(1/\epsilon)$. Other representations aim to store some classes of permutations in compressed form while supporting the same functionality \cite{BN13,BCGNN14}.

None of those constructions support updates to $\pi$, however. 
A simple way to support updates and some queries is to regard the one-line representation of $\pi$ as a string $P$ of length $n$ over the alphabet $\{1,\ldots,n\}$, where each symbol appears exactly once. A sequence representation of $P$ that supports accessing any $P[i]$ and finding the first occurrence of any value $j$ in $P$ (called $select_j(P,1)$ in the literature) simulates the queries $\pi(i) = P[i]$ and $\pi^{-1}(j) = select_j(P,1)$. Dynamic sequence representations \cite{MunroN15} applied on $P$ then support both operations in time $\Oh(\frac{\log n}{\log\log n})$ and use $n\log n + o(n\lg n)$ bits. Within the same time, they support insertions and deletions of symbols in $P$, which can be used to reflect the updates we wish to perform on $\pi$. An insertion is an operation that, given as input a sequence $S$ of length $n$, a position $i$, and a character $c$, gives as output a new sequence $S'$ where $S'[1, i-1] = S[1, i-1]$, $S'[i] = c$, and $S'[i+1, n+1] = S[i, n]$. A deletion is an operation that, given as input a sequence $S$ of length $n$ and a position $i$, returns as output a new sequence $S'$ where $S'[1, i-1] = S[1,i-1]$ and $S'[i, n-1] = S[i+1, n]$. Other more sophisticated queries are supported only naively, however, for example arbitrary powers $\pi^k$ and $\pi^{-k}$ take time $\Oh(k \cdot \frac{\log n}{\log\log n})$. In general, such a representation does not efficiently support important queries and update operations regarding the cycle structure of $\pi$.


\subsection{Our contribution}\label{sec:contributions}

We introduce a new data structure, called {\em forest of splay trees (FST)}, which is to the best of our knowledge the most efficient one to maintain dynamic permutations when the focus of interest is the permutation's cycle structure. The FST supports the following operations, which significantly enhance the data structure presented by Giuliani et al.~\cite{GiulianiLMR21}.

\begin{itemize}
    \item Access: compute $\pi(i)$ and $\pi^{-1}(j)$ for any $i$ and $j$.
    \item Powers: compute $\pi^k(i)$ and $\pi^{-k}(j)$ for any $i$, $j$, and integer $k$.
    \item Number of cycles: give the number of cycles in $\pi$.
    \item Cycle size: give the number of elements in the cycle of $i$, for any $i$.
    \item Same cycle: determine if $i$ and $j$ are in the same cycle, for any $i$ and $j$.
    \item Distance: return $d$ such that $\pi^d(i)=j$, if such a $d$ exists, $\infty$ otherwise, for any $i$ and $j$. 
    \item Transpose: exchange the values $i$ and $j$ in the one-line representation of $\pi$ (``transpose $i$ and $j$''), or exchange the values at positions $i$ and $j$ (``transpose $\pi(i)$ and $\pi(j)$'').
    \item Flip: reverse a segment of the cycle to which both values $i$ and $j$ belong.
\end{itemize} 

The following theorem summarizes our contribution.

\begin{theorem}
Let $\pi$ be a permutation on $n$. The FST is built from $\pi$ in $\Oh(n)$ time and uses $3n\log n+\Oh(n)$ bits of space. Once built, it supports each of the queries and updates above in $\Oh(\log n)$ amortized time. 
\end{theorem}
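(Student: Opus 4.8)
The plan is to realize each cycle of $\pi$ as a single binary search tree whose in-order traversal lists the elements in cyclic order, so that $\pi(i)$ is the (cyclic) in-order successor of the node holding $i$ and $\pi^{-1}(j)$ is its (cyclic) in-order predecessor. I would keep the trees balanced under the splay discipline, augment every node with the size of its subtree, and attach to every node a lazy reversal bit whose meaning is ``the subtree hanging from here is logically reversed''. To locate the node of a given element in constant time I would store the nodes in an array indexed by $1,\ldots,n$, so that the element is implicit in its index, and I would keep a global counter with the current number of trees. For the space bound, each node then needs only a small constant number of $\lceil\log n\rceil$-bit fields (child and the subtree size) together with its one-bit reversal flag and the auxiliary counter; a careful layout that avoids redundant fields packs this into $3n\log n+\Oh(n)$ bits. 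For construction I would compute the cycle decomposition of $\pi$ by following $\pi$ from each unvisited element in $\Oh(n)$ total time, and turn each cyclic list of length $m$ into a perfectly balanced BST in $\Oh(m)$ time; since a splay tree may be initialized to any BST shape, the resulting forest is a valid FST and the whole process is $\Oh(n)$.

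The queries reduce to standard augmented-splay-tree operations. To answer access and powers I would splay the node of $i$ to the root of its tree; its in-order rank is then read from the size of its left subtree, and $\pi^k(i)$ is obtained by a select for rank $(\mathit{rank}(i)+k)\bmod m$, where $m$ is the size of the root, with $\pi^{-k}$ symmetric and $\pi,\pi^{-1}$ the special cases $k=\pm1$. Cycle size is the size stored at the root, the number of cycles is the global counter, the distance is $(\mathit{rank}(j)-\mathit{rank}(i))\bmod m$ when $i$ and $j$ lie in the same tree and $\infty$ otherwise, and same-cycle is decided by testing whether $i$ and $j$ share a root after splaying. Each such operation performs $\Oh(1)$ splays and rank/select descents, so it costs $\Oh(\log n)$ amortized by the classical splay access bound.

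The updates are where the real work lies. A transposition acting on $\pi$ either splits one cycle into two (when the two affected elements share a cycle) or merges two cycles into one (otherwise); in both cases its effect on the cyclic sequences is to cut two ``arrows'' and reconnect them. I would implement this with a constant number of tree \emph{split} and \emph{join} operations, after splaying the relevant nodes to expose the cut points, adjusting the cycle counter by $\pm1$. A flip is handled by splitting out the affected segment, toggling its reversal bit, and joining the pieces back. Since split and join are each realized by $\Oh(1)$ splays, every update is $\Oh(\log n)$ amortized as well.

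The main obstacle I anticipate is bookkeeping rather than asymptotics. I must push the lazy reversal bits correctly along every access path so that subtree sizes remain consistent and so that the successor/predecessor semantics encoding $\pi$ and $\pi^{-1}$ are preserved across splits, joins, and splays: a reversed segment locally swaps the roles of $\pi$ and $\pi^{-1}$, and the join that re-forms a cycle must respect the intended orientation. The delicate part is therefore verifying that, under the convention ``$\pi(x)$ is the successor in the current logical in-order sequence'', toggling a reversal bit reproduces exactly the permutation obtained by reversing the chosen arc, and that the split/join sequences chosen for transpose yield the correct two arcs (in the split case) or the correctly concatenated single sequence (in the merge case). Once this correctness is established, the amortized bounds are immediate from the standard splay analysis.
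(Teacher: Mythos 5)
Your overall design coincides with the paper's: one splay tree per cycle keyed by cycle position, subtree sizes for rank/select, direct node access via an array indexed by element, a global cycle counter, split/join for transpositions, and a lazy reversal flag for flips. The one genuine gap is in the construction bound. You assert that ``a splay tree may be initialized to any BST shape'' and conclude that the whole process is $\Oh(n)$; but the splay-tree guarantees are amortized against the potential $\phi=\sum_v \log s(v)$, and the actual running time of the subsequent $m$ operations is bounded only by the sum of their amortized costs \emph{plus the initial potential}. For an arbitrary BST shape on $n$ nodes the initial potential can be $\Theta(n\log n)$, which would contaminate the claimed bounds: the structure would carry $\Oh(n\log n)$ of hidden debt, so one could not claim $\Oh(n)$ construction followed by $\Oh(\log n)$ amortized per operation. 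The paper closes this with a dedicated lemma: for a \emph{perfectly balanced} tree on $r$ nodes, grouping nodes by level $\ell$ and bounding $s(v)<2^\ell$ gives $\phi < \sum_{\ell=1}^{\log r}(1+r/2^\ell)\log 2^\ell = 2r+\Oh(\log^2 r)$, so the whole forest starts with potential $\Oh(n)$ and the linear construction time genuinely absorbs it. You happened to choose perfectly balanced trees, so your construction is salvageable, but you must add this potential computation; the balanced shape is not a convenience, it is what makes the theorem true as stated.

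A secondary under-specified point: for splitting and merging cycles you say you ``expose the cut points'' with $\Oh(1)$ splits and joins, but the mechanism that makes this work is the cycle-rotation primitive. Since $(A,i,B)$ and $(B,A,i)$ denote the same cycle, one can re-linearize a cycle so that $i$ becomes the last element by performing split$(i,t)$ and then join$(t_2,t_1)$ with the roles of ``smaller'' and ``larger'' deliberately swapped. The paper's transposition lemma is phrased with the pivot element at the end of its cycle, so this re-rooting is exactly what reduces a transposition to a single tree split or join; without it, the claim that a constant number of splits and joins suffices is not yet justified. The same primitive underlies the distance query. These are fixable omissions rather than wrong steps; the remainder of your argument (queries via rank/select on subtree sizes, flips via lazy reversal with top-down fixing, and the $3n\log n+\Oh(n)$ space via a non-redundant node layout) matches the paper's proof.
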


The original data structure, which we now generalize with the FST, supports only a restricted class of transpositions on permutations with specific properties. It already proved useful on an application using the Burrows-Wheeler Transform \cite{Burrows94}, where it allowed replacing an $\Oh(n^2)$ algorithm with an $\Oh(n \log n)$ time one \cite{GiulianiLMR21}. 

The rest of the paper is organized as follows: In Section~\ref{sec:basics}, we give the necessary definitions. In Section~\ref{sec:FST}, we present the \FST\ data structure and  the operations. In Section~\ref{sec:comparison}, we give a comparison of the \FST\ to existing data structures for permutations. We close with an outlook in Section~\ref{sec:conclusion}.


\section{Basics} \label{sec:basics}

\subsection{Permutations}

Given a positive integer $n$, a {\em permutation of $n$}  is a bijection from the set $\{1,2,\ldots,n\}$ to itself. Two common ways to represent a permutation $\pi$ are the two-line notation $\left(\begin{smallmatrix} 
1 & 2 & \ldots & n\\
\pi(1) & \pi(2) & \ldots & \pi(n)
\end{smallmatrix}\right)$ 
and the one-line notation, $\pi(1)\pi(2)\cdots\pi(n)$, where the top row is omitted. The set of permutations of $n$ forms a group together with the composition (or product) of functions, and is called the {\em symmetric group}, denoted $S_n$. We write $\tau \cdot \pi$ for the permutation resulting from applying $\tau$ after $\pi$, that is, $(\tau \cdot \pi)(i) = \tau (\pi(i))$. 
A {\em cycle} of $\pi$ is a minimal subset $C$ of $\{1, 2, \ldots, n\}$ such that $\pi(C) = C$. A permutation consisting of only one cycle is said to be {\em cyclic}, a cycle of length 1 is a {\em fixpoint}, and a cycle of length 2 is called a {\em transposition}. Every permutation can be uniquely decomposed into disjoint cycles ({\em cycle decomposition}), and can thus be represented as a composition of its cycles. For example, the cycle representation of the permutation $\pi = \left(\begin{smallmatrix} 1 & 2 & 3 & 4 & 5 & 6 \\ 3 & 6 & 1 & 4 & 2 & 5  \end{smallmatrix}\right)$ is $(1, 3)(2,6,5)(4)$: of the three cycles, $(1, 3)$ is a transposition, $(2, 6, 5)$ is a cycle of length 3, and $(4)$ is a fixpoint. A cycle can be written starting from any one of its elements, for example $(2,6,5) = (6,5,2) = (5,2,6)$. Further, permutations in general do not commute, but disjoint cycles do. Thus, the cycle representation is unique only up to order of the cycles and up to cycle rotations, for example the above permutation could also be written as $\pi = (3,1)(4)(6,5,2)$. It is common to drop  fixpoints from the cycle representation, thus an $n$-permutation $\tau$ with $n-2$ fixpoints and one transposition $(i,j)$ is written simply as $\tau=(i,j)$. 

For more on permutations, see the book by B\'ona~\cite{BonaBook}, or any textbook on discrete mathematics, such as that of Aigner~\cite{AignerBook}. 

Let $\pi$ be a permutation of $n$ and $\tau$ a transposition. It is a well-known fact that the number of cycles of $\pi' = \tau \cdot \pi$ increases by $1$ if the two elements are in the same cycle, and decreases by $1$ if they are in different cycles. In particular, the exact form of the resulting permutation is given by the following lemma. For convenience, the transposition is given in the form $(\pi(x),\pi(y))$, for some $x$ and $y$. Note that an element is always in the same cycle as its image, so $x$ and $y$ are in the same cycle if and only if $\pi(x)$ and $\pi(y)$ are in the same cycle. 

\begin{lemma}[\cite{GiulianiLMR21}, Lemma 3]\label{lemma:cycles} 
	Let $\pi = C_1 \cdots C_k$ be the cycle decomposition of the permutation $\pi$, $x\neq y$, and $\pi' = (\pi(x),\pi(y)) \cdot \pi$. 
	\begin{enumerate}
		\item (Split case) If $x$ and $y$ are in the same cycle $C_i$, then this cycle is split into two. In particular, let $C_i = (c_1, c_2, \ldots, c_j, \ldots, c_m)$, with $c_m = x$ and $c_j = y$. Then $\pi' =$ $ (c_1,c_2,\ldots, c_{j-1},y)$ $(c_{j+1}, \ldots, c_{m-1},x) \prod_{\ell\neq i} C_{\ell}$.  
		
		\item (Join case) If $x$ and $y$ are in different cycles $C_i$ and $C_j$, then these two cycles are merged. In particular, let $C_i = (c_1, c_2, \ldots, c_m)$, with $c_m = x$, and $C_j = (c'_1, c_2', \ldots, c_r')$, with $c_r'=y$, then $\pi' = (c_1, \ldots, c_{m-1},x, c_1', \ldots, c'_{r-1},y) \prod_{\ell\neq i,j} C_{\ell}$. 
		
	\end{enumerate}
\end{lemma}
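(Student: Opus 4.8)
The plan is to first pin down the precise action of $\pi'=(\pi(x),\pi(y))\cdot\pi$ as a function, and then read off its cycle structure by directly tracing orbits. The starting observation is that composing on the left with the transposition $\tau=(\pi(x),\pi(y))$ merely swaps the two output values $\pi(x)$ and $\pi(y)$ and fixes all others. Concretely, since $\pi'(i)=\tau(\pi(i))$, we get $\pi'(i)=\pi(i)$ whenever $\pi(i)\notin\{\pi(x),\pi(y)\}$, while $\pi'(x)=\tau(\pi(x))=\pi(y)$ and $\pi'(y)=\tau(\pi(y))=\pi(x)$. Hence $\pi'$ is obtained from $\pi$ by exactly swapping the images of the two preimages $x$ and $y$, leaving every other image untouched. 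I expect establishing this ``swap the images of $x$ and $y$'' reformulation, including the verification that $\pi(x)\ne\pi(y)$ (which holds because $x\ne y$ and $\pi$ is a bijection, so that $\tau$ is a genuine transposition), to be the one genuinely delicate point; everything afterwards is bookkeeping.

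With this description in hand, I would handle the two cases by following the cycle(s) through the two altered images. For the split case, write $C_i=(c_1,\ldots,c_j,\ldots,c_m)$ with $c_m=x$ and $c_j=y$, so that the cycle convention gives $\pi(x)=\pi(c_m)=c_1$ and $\pi(y)=\pi(c_j)=c_{j+1}$. Because only the images of $x$ and $y$ change, I would start the orbit under $\pi'$ at $c_1$ and walk $c_1\mapsto c_2\mapsto\cdots\mapsto c_{j-1}\mapsto c_j=y$; at $y$ the altered image sends $\pi'(y)=\pi(x)=c_1$, closing the first cycle $(c_1,\ldots,c_{j-1},y)$. Starting instead at $c_{j+1}$ gives $c_{j+1}\mapsto\cdots\mapsto c_{m-1}\mapsto c_m=x$, and $\pi'(x)=\pi(y)=c_{j+1}$ closes the second cycle $(c_{j+1},\ldots,c_{m-1},x)$. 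Since $\pi'$ agrees with $\pi$ everywhere outside $C_i$, all remaining cycles $C_\ell$ with $\ell\ne i$ are unchanged, yielding the claimed decomposition.

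For the join case, write $C_i=(c_1,\ldots,c_m)$ with $c_m=x$ and $C_j=(c_1',\ldots,c_r')$ with $c_r'=y$, so that $\pi(x)=c_1$ and $\pi(y)=c_1'$. Tracing the orbit from $c_1$ now gives $c_1\mapsto\cdots\mapsto c_{m-1}\mapsto x$, then $\pi'(x)=\pi(y)=c_1'$ jumps into the second cycle, $c_1'\mapsto\cdots\mapsto c_{r-1}'\mapsto y$, and finally $\pi'(y)=\pi(x)=c_1$ closes the single merged cycle $(c_1,\ldots,c_{m-1},x,c_1',\ldots,c_{r-1}',y)$. As before, all cycles other than $C_i$ and $C_j$ are fixed, giving the stated form.

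The only remaining care is with degenerate index ranges, for instance when $j=1$ so that the list $c_1,\ldots,c_{j-1}$ is empty and the first split cycle is the fixpoint $(y)$, or symmetrically when $j=m-1$ so that the second split cycle is the fixpoint $(x)$, or when a cycle has length one or two. In each such case the orbit-tracing argument specializes without modification, so I would dispatch these by the same walk and simply read empty sublists as absent. The overall structure of the proof is therefore a one-line reformulation of $\pi'$ followed by two explicit orbit traversals, and the combinatorial ``$\pm1$'' statement on the number of cycles follows immediately as a corollary.
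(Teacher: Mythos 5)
Your proof is correct. Note, however, that the paper does not prove this lemma at all: it is stated with a citation to \cite{GiulianiLMR21} (Lemma~3) and used as imported background, so there is no in-paper argument to compare against; your argument---first observing that left-composition with $(\pi(x),\pi(y))$ simply swaps the images of $x$ and $y$ (with $\pi(x)\neq\pi(y)$ guaranteed by injectivity), then tracing the orbits of $\pi'$ through the affected cycle(s) and checking the degenerate index ranges---is the standard direct verification and establishes exactly the stated cycle decompositions in both the split and join cases.
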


\begin{example} 
Let $\pi = \left(\begin{smallmatrix} 1 & 2 & 3 & 4 & 5 & 6 & 7 & 8 & 9 \\ 8 & 2 & 5 & 3 & 1 & 7 & 9 & 4 & 6  \end{smallmatrix}\right) = (1,8,4,3,5)(2)(6,7,9)$, then

$(\pi(1),\pi(4))\cdot\pi = \left(\begin{smallmatrix} 1 & 2 & 3 & 4 & 5 & 6 & 7 & 8 & 9 \\ \textcolor{red}{3} & 2 & 5 & \textcolor{red}{8} & 1 & 7 & 9 & 4 & 6  \end{smallmatrix}\right) = (1,3,5)(2)(4,8)(6,7,9)$ [Split case] 

$(\pi(3),\pi(6))\cdot\pi = \left(\begin{smallmatrix} 1 & 2 & 3 & 4 & 5 & 6 & 7 & 8 & 9 \\ 3 & 2 & \textcolor{red}{7} & 8 & 1 & \textcolor{red}{5} & 9 & 4 & 6  \end{smallmatrix}\right) = (1,3,7,9,6,5)(2)(4,8)$ [Join case]
\end{example}


\subsection{Splay Trees}\label{sec:splaytrees}

Splay trees~\cite{SleatorT85} are binary search trees that allow joining and splitting of trees in addition to the usual operations (such as access to, or insertion and deletion of, items). They are not necessarily balanced, but they undergo a self-adjusting procedure after each operation that guarantees that operations take amortized logarithmic time in the total number of items. In particular, 
after each operation involving an element $x$, this element $x$ is moved to the root of its tree via a structural rearrangement called {\em splaying}. Splaying consists of a series of edge rotations applied repeatedly on $x$ until it becomes the root of the tree. The rotations can be of one of three types, referred to as {\em zig}, {\em zig-zig}, and {\em zig-zag}, depending on the relative position of $x$ w.r.t.\ its parent and grandparent. For more details, see the Appendix and the original paper. 

Concretely, splay trees support the following operations: 

\begin{itemize} 
\item {\em access$(i,t)$}: return a pointer to item $i$ if it is in tree $t$, otherwise return NIL 
\item {\em insert$(i,t)$}: insert item $i$ into $t$ (assuming it is not present)
\item {\em delete$(i,t)$}: delete item $i$ from $t$ (assuming it is present)
\item {\em join$(t_1,t_2)$}: construct tree $t$ containing all items in $t_1$ and $t_2$ (assuming that all items in $t_1$ are strictly smaller than those in $t_2$) 
\item {\em split$(i,t)$}: return two new trees $t_1$ and $t_2$, where $t_1$ contains all items in $t$ less than or equal to $i$, and $t_2$ contains all items in $t$ greater than $i$
\end{itemize}

The next theorem from the original paper gives the complexity of the operations.

\begin{theorem} [Balance Theorem with Updates~{\cite[Thm.~6]{SleatorT85}}] \label{thm:SleatorT85}
	A sequence of $m$ arbitrary operations on a collection of initially empty splay trees takes $\Oh(m+ \sum^{m}_{j=1} \log n_j)$ time, where $n_j$ is the number of items in the tree or trees involved in operation $j$.
\end{theorem}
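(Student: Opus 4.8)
The plan is to prove the bound by amortized analysis via the potential method, following Sleator and Tarjan. First I would assign every item a fixed positive weight $w(x)=1$, define the \emph{size} $s(x)$ of a node $x$ as the total weight of the items in the subtree rooted at $x$, its \emph{rank} $r(x)=\log s(x)$, and the \emph{potential} $\Phi$ of the whole forest as $\sum_x r(x)$ over all current nodes. Because with unit weights every node satisfies $s(x)\ge 1$, the potential is always non-negative, and since the trees are initially empty we have $\Phi_0=0$. The amortized cost of an operation is its actual cost plus the change in $\Phi$ it causes; by the telescoping of the $\Phi$ terms the total actual time of the sequence equals the total amortized time minus $(\Phi_{\mathrm{final}}-\Phi_0)=\Phi_{\mathrm{final}}\ge 0$, hence is at most the total amortized time. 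It therefore suffices to show that each operation $j$ has amortized cost $\Oh(1+\log n_j)$.

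The heart of the argument is the \emph{Access Lemma}: splaying a node $x$ up to the root $t$ of its tree costs at most $3\bigl(r(t)-r(x)\bigr)+1$ amortized. I would prove this by bounding the amortized cost of a single splay step (zig, zig-zig, or zig-zag) in terms of the change in the rank of $x$ before and after the step, so that the per-step bounds telescope along the access path. The zig step contributes at most $1+3(r'(x)-r(x))$, and each zig-zig and zig-zag step at most $3(r'(x)-r(x))$, where $r'$ denotes rank after the step; the key inequality in the zig-zig and zig-zag cases is the concavity of the logarithm, $\log a+\log b\le 2\log\frac{a+b}{2}$, applied to the sizes of the two nodes whose subtrees are reorganized. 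Summing over all steps telescopes to $3(r(t)-r(x))+1$. With unit weights, $s(t)\le n_j$ and $s(x)\ge 1$ give $r(t)\le\log n_j$ and $r(x)\ge 0$, so a single splay is $\Oh(\log n_j)$ amortized.

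With the Access Lemma in hand I would charge each of the five operations. An \emph{access} is a single splay, hence $\Oh(\log n_j)$. A \emph{split} splays the pivot to the root and then severs one child link, which only lowers $\Phi$, so it is $\Oh(\log n_j)$. A \emph{join} of $t_1,t_2$ splays the maximum of $t_1$ to its root (amortized $\Oh(\log n_j)$) and attaches $t_2$ as its right child; this link raises $\Phi$ only by $\log\bigl(s(t_1)+s(t_2)\bigr)-\log s(t_1)\le\log n_j$. A \emph{delete} splays $x$ to the root, removes it (which decreases $\Phi$), and joins the two resulting subtrees, each phase being $\Oh(\log n_j)$. The one genuinely new accounting is for \emph{insert}: after attaching the new leaf $x$ and before splaying, the sizes of its ancestors $x_1,\dots,x_k$ (root to parent) each grow by one, raising $\Phi$ by $\sum_{i}\bigl(\log(s(x_i)+1)-\log s(x_i)\bigr)$. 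Using $s(x_i)\ge s(x_{i+1})+1$ (a node weighs at least its child's subtree plus itself), this sum is dominated by the telescoping sum $\sum_i\bigl(\log(s(x_i)+1)-\log(s(x_{i+1})+1)\bigr)\le\log(n_j+1)$; the ensuing splay adds a further $\Oh(\log n_j)$.

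Putting these together, every operation $j$ is amortized $\Oh(1+\log n_j)$, and summing over the $m$ operations yields total amortized cost $\Oh\bigl(m+\sum_{j=1}^m\log n_j\bigr)$, which by the potential argument bounds the total actual time. I expect the main obstacle to be the Access Lemma, and within it the zig-zig case, where one must group the terms so that the logarithmic concavity inequality yields exactly the factor that makes the per-step bounds telescope; the insertion potential-increase bound is the secondary subtlety, since insert is the only operation where the item set grows and $\Phi$ can rise from a cause other than splaying.
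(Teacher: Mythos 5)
Your proof is correct and is essentially the proof the paper relies on: the paper does not prove this theorem itself but imports it from Sleator and Tarjan, and your argument---unit weights, rank-sum potential, the Access Lemma with the telescoping per-step bounds, plus the $\Oh(\log n_j)$ potential-increase accounting for \emph{insert} and \emph{join}---is exactly their original proof of the Balance Theorem with Updates. No gaps: the insertion bound via $s(x_i)\ge s(x_{i+1})+1$ and the join bound $\log(s(t_1)+s(t_2))-\log s(t_1)\le\log n_j$ are the right (and standard) ways to handle the only two places where the potential can rise outside of splaying.
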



\section{Forest of splay trees (FST)} \label{sec:FST} 

Our data structure, FST, stores a forest of splay trees partitioning the set $\{1,\ldots,n\}$. The FST allows constant-time access to each node in addition to the usual operations on splay trees. The permutation $\pi$ is represented by one splay tree for each of its cycles. 

The elements are keyed by their position in their cycle, that is, the splay tree is a binary search tree with respect to the positions in the cycle (and not w.r.t.\ the elements themselves); see Figure~\ref{fig:FST-example} for an example. In particular, an in-order traversal of one of the splay trees yields the corresponding cycle. The fact that the linearization of a cycle is not unique (i.e., that all rotations represent the same cycle) will be important later. 

The concrete FST data structure consists of a  counter {\em cycles} and a 
$3\times n$ matrix $M$; for reasons of presentation we give a variant of the matrix with $4$ rows and show later how to reduce this to only $3$ rows. The counter {\em cycles} contains the number of cycles of $\pi$, while, for $1\leq i \leq n$, $M_{1i}$ is the parent of $i$ (NIL if $i$ is the root of its tree), $M_{2i}$ is the left child and $M_{3i}$ the right child of $i$ (NIL if no left, resp.\ right, child is present), and $M_{4i}$ is the size of the subtree rooted in $i$. See Figure~\ref{fig:FST-example} again for an example. 

\input{example_FST} 

We can reduce the matrix to only 3 rows by applying a standard trick on binary trees. We substitute the first three rows with only two rows containing the left child and right sibling information. If the right sibling does not exist, then the entry points to the parent. Thus, one can access the right child by accessing the left child and then its right sibling (2 steps); and the parent by accessing the right sibling: if there is none, then we get the parent immediately, otherwise we access the right sibling of the node pointed to. In both cases, we have replaced one step by one or two steps, thus, navigation time remains within the same bounds. 
The space required by the \FST\ is then $3n\lceil \log n\rceil = 3n\log n + \Oh(n)$ bits.


\subsection{Construction and splay tree operations}

Given an input permutation $\pi$, we compute the cycle decomposition of $\pi$ and assign to $c$ the number of cycles. We then build a splay tree for each cycle. Instead of building it by inserting all the cycle elements in an initially empty splay tree, which would lead to $\Oh(n\log n)$ construction time, we apply a standard linear-time construction that builds a perfectly balanced binary search tree for each cycle. We now show that the potential function of the splay trees created with that shape is also $\Oh(n)$, which allows us combine this $\Oh(n)$ construction time with all the other amortized operation times. 

\begin{lemma}\label{lemma:balanced}
The potential function of a perfectly balanced splay tree with $r$ nodes is $2r+\Oh(\log^2 r) \subseteq \Oh(r)$.

\end{lemma}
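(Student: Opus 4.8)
The plan is to bound the Sleator--Tarjan potential $\Phi = \sum_x \log s(x)$ of the constructed tree, where $s(x)$ is the number of nodes in the subtree rooted at $x$ and every node carries unit weight, since this is precisely the quantity that must be charged against the budget when the freshly built tree enters the amortized analysis (so that the $\Oh(n)$ initialization can be combined with the other amortized bounds). I would establish the bound by summing the ranks $\log s(x)$ level by level, exploiting that the linear-time construction yields a tree in which every level except possibly the last is full.

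Fix $h=\lfloor\log r\rfloor$, so that each level $d\in\{0,\ldots,h-1\}$ contains exactly $2^d$ nodes and only the bottom level may be incomplete. The crucial observation is an exact identity for the total subtree size on a full level: since every node of depth at least $d$ lies in the subtree of exactly one ancestor at depth $d$, we have $\sum_{\mathrm{depth}(x)=d} s(x) = r-(2^d-1)$. Concavity of $\log$ (Jensen's inequality applied to the $2^d$ equally weighted sizes on level $d$) then gives $\sum_{\mathrm{depth}(x)=d}\log s(x)\le 2^d\log\big((r-2^d+1)/2^d\big)$, while the leaves on the partial bottom level contribute $0$. Summing over $d=0,\ldots,h-1$ and simplifying the resulting arithmetic-geometric series --- whose controlling factor is $\sum_{k\ge 1}k\,2^{-k}=2$ together with $2^h\le r$ --- produces a leading term of $2r$, with the remaining terms of strictly lower order.

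The step I expect to demand the most care is pinning down the constant. A careless per-level estimate, for instance replacing every $s(x)$ on level $d$ by its maximum possible value $2^{h-d+1}-1$, overcounts and yields a constant of $3$ or $4$ rather than $2$; the sharp constant emerges only from the averaged bound above, which uses that the sizes on a level sum to roughly $r/2^d$ instead of all attaining their maximum. Once the leading $2r$ is isolated, the residual contributions --- the rounding incurred by the floor/ceiling in the recursive split and the effect of the single at-most-partial bottom level, each touching the $\Oh(\log r)$ levels by at most $\Oh(\log r)$ --- are comfortably bounded by $\Oh(\log^2 r)$. This gives $\Phi = 2r+\Oh(\log^2 r)$, and since the second term is dominated by the first we conclude $\Phi\in\Oh(r)$, as the lemma asserts.
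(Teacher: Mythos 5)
Your proposal is correct, and at its core it is the same argument as the paper's: partition the nodes by level, bound the total rank $\sum_v \log s(v)$ contributed by each level, and observe that the resulting sum is governed by the series $\sum_{k\ge 1} k\,2^{-k}=2$, which yields the leading term $2r$. Where you differ is in the per-level bound. The paper counts levels from the deepest leaves, bounds each node's rank by the worst case $\log s(v)<\ell$ for a level-$\ell$ node, and multiplies by a per-level node count stated as $1+r/2^\ell$; you instead use the exact identity $\sum_{\mathrm{depth}(x)=d}s(x)=r-2^d+1$ together with Jensen's inequality to bound a whole level at once. Your refinement buys something real: your own caveat about the ``careless per-level estimate'' is justified, because the number of nodes at level $\ell$ is exactly $2^{d-\ell}$ on the full levels and can be as large as $2r/2^\ell$ when the bottom level is nearly empty, so the worst-case-size-per-node route, computed with a safe node count, gives a constant of $4$ rather than $2$; the averaged (Jensen) bound is what makes the constant $2$ airtight. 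Either way, the conclusion that matters downstream, $\phi\in\Oh(r)$, holds under both arguments. The one place your sketch should be tightened is the final summation: since $\log r$ is generally not an integer, writing $r=\alpha 2^h$ with $1\le\alpha<2$ gives the bound $2^h(2+\log\alpha)=r(2+\log\alpha)/\alpha\le 2r$; this is routine but necessary to land on $2r+\Oh(\log^2 r)$ rather than something like $2.5r$.
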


\begin{proof}
Let $d$ be the depth of the deepest leaves in a perfectly balanced binary tree, and call $\ell = d-d'+1$ the {\em level} of any node of depth $d'$. It is easy to see that there are at most $1+r/2^\ell$ subtrees of level $\ell$. Those subtrees have at most $2^\ell-1$ nodes. The potential function used in the analysis of splay trees \cite{SleatorT85} is
$\phi = \sum_v \log s(v)$, where $s(v)$ is the number of nodes in the subtree rooted at $v$ and the sum ranges over all the nodes of the splay tree. Separating this sum by levels $\ell$ and using the bound $s(v) < 2^\ell$ if $v$ is of level $\ell$, we get
$$\phi ~<~ \sum_{\ell=1}^{\log r} \left(1+\frac{r}{2^\ell}\right) \log 2^\ell ~=~ 
2r + \Oh(\log^2 r).$$
\end{proof}

Since all the splay trees together add up to $n$ nodes, the potential function $\phi$ of the forest is $\Oh(n)$ after building them all from the permutation. Further, since each splay tree contains at most $n$ nodes at any given time, we can derive the following corollary.

\begin{corollary} 
	A sequence of $m$ arbitrary splay tree operations (access, insertion, deletion, split, or join) on an \FST\ of an $n$-permutation takes $\Oh(n + m \log n)$ time.
\end{corollary}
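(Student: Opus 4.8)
The plan is to reduce to Theorem~\ref{thm:SleatorT85} (Balance Theorem with Updates), correcting only for the fact that our trees are not initially empty. Recall that the splay-tree bound of Sleator and Tarjan is an amortized one based on the potential $\phi = \sum_v \log s(v)$: for any sequence of operations, the total actual running time equals the sum of the amortized costs plus the net potential release $\phi_{\mathrm{init}} - \phi_{\mathrm{final}}$. Their proof establishes that the amortized cost of an operation involving a tree (or two trees) of total size $n_j$ is $\Oh(1 + \log n_j)$, \emph{independently} of how those trees arose; the hypothesis that the collection starts out empty is used only to assert $\phi_{\mathrm{init}} = 0$, whence actual cost is bounded by amortized cost alone. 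I would make this separation explicit, since it is the crux of adapting the cited theorem to our setting.

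First I would invoke the amortized bound directly. Summing over the $m$ operations, the total amortized cost is $\Oh(m + \sum_{j=1}^m \log n_j)$, exactly as in Theorem~\ref{thm:SleatorT85}. Since every splay tree in the \FST\ contains at most $n$ nodes at any given time, we have $n_j \le n$ for every $j$, so $\sum_{j=1}^m \log n_j \le m\log n$, and the summed amortized cost is $\Oh(m + m\log n) = \Oh(m\log n)$.

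Next I would bound the initial potential. By Lemma~\ref{lemma:balanced}, a perfectly balanced splay tree on $r$ nodes has potential $2r + \Oh(\log^2 r) \subseteq \Oh(r)$. The \FST\ is built as one such tree per cycle, and the cycle sizes sum to $n$, so adding the per-tree bounds yields $\phi_{\mathrm{init}} = \Oh(n)$. Because $s(v) \ge 1$ for every node, the potential is always nonnegative; hence $\phi_{\mathrm{final}} \ge 0$ and the net release $\phi_{\mathrm{init}} - \phi_{\mathrm{final}}$ is at most $\Oh(n)$.

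Finally I would combine the two estimates: total actual time $\le$ (summed amortized cost) $+ (\phi_{\mathrm{init}} - \phi_{\mathrm{final}}) = \Oh(m\log n) + \Oh(n) = \Oh(n + m\log n)$, as claimed. The only delicate point, and thus the main obstacle, is precisely the mismatch in hypotheses: Theorem~\ref{thm:SleatorT85} is phrased for an \emph{initially empty} collection, so I must argue that its proof already bounds the summed amortized cost for \emph{any} starting configuration, and that the sole correction needed is the additive $\phi_{\mathrm{init}} = \Oh(n)$ term furnished by Lemma~\ref{lemma:balanced}. Once that is spelled out, the remainder is routine.
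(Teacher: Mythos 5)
Your amortization argument is correct and matches the paper's: the paper likewise combines Lemma~\ref{lemma:balanced} (initial potential $\phi=\Oh(n)$ over all cycles, since the trees are built perfectly balanced and their sizes sum to $n$) with Theorem~\ref{thm:SleatorT85} and the bound $n_j\le n$ to get $\Oh(n+m\log n)$. In fact you are more explicit than the paper about the one delicate point, namely that the ``initially empty collection'' hypothesis of Theorem~\ref{thm:SleatorT85} only serves to set $\phi_{\mathrm{init}}=0$, so that starting from a non-empty configuration costs exactly the additive $\phi_{\mathrm{init}}-\phi_{\mathrm{final}}\le\Oh(n)$; spelling that out is a genuine improvement in rigor over the paper's terser ``combined with Theorem~\ref{thm:SleatorT85}.''

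What you omit, and what occupies roughly half of the paper's proof, is the verification that the abstract splay-tree primitives really cost $\Oh(1)$ per elementary step on the \emph{concrete} \FST\ representation: an edge rotation is six cell updates in the matrix $M$, an edge addition or removal in \emph{join}/\emph{split} is two cell updates plus an increment or decrement of the \emph{cycles} counter, and the subtree-size row must be maintained through each rotation (Figure~\ref{fig:size-update}). The corollary is stated for operations ``on an \FST,'' so strictly speaking one must check that the matrix encoding does not inflate the constant per rotation before the Sleator--Tarjan accounting transfers. This is routine bookkeeping rather than a mathematical gap, but a complete proof should include a sentence to that effect.
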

\begin{proof} 
An edge rotation can be implemented on the \FST\ with 6 updates in the matrix $M$ (update cells $M_{1x}$, $M_{2x}$, $M_{1y}$, $M_{3y}$, $M_{1M_{2y}}$, $M_{2M_{1x}}$ or $M_{3M_{1x}}$ for a right rotation; it is similar for a left rotation). Regarding the {\em join} and {\em split} operations, a removal or addition of an edge implies changing 2 cells in $M$. The {\em cycles} counter has to be incremented for each {\em split} operation and decremented for each {\em join} operation, adding constant time in each case. Furthermore, after each edge rotation and {\em join} and {\em split} operations, the subtree sizes of the involved nodes need to be updated. This is done in the standard way for binary trees that are annotated with subtree sizes. We illustrate an edge rotation in Figure~\ref{fig:size-update}. Altogether, the splay tree operations {\em access, join}, and {\em split} can be implemented in constant time on the \FST. 

It follows from Lemma~\ref{lemma:balanced} that the original splay trees can be constructed in $\Oh(n)$ time and makes us start with the potential function at $\phi = \Oh(n)$. Combined with Theorem~\ref{thm:SleatorT85} and the fact that each cycle has at most $n$ nodes, we get that the total time is $\Oh(n + m \log n)$, where $\Oh(n)$ owes to the initial construction of the splay trees. 
\end{proof}

\begin{figure}[t]
    \centering
    \tikzset{every picture/.style={line width=0.75pt}} 
    \begin{tikzpicture}[x=0.75pt,y=0.75pt,yscale=-1,xscale=1]
    
    \draw   (79.67,25.33) .. controls (79.67,17.05) and (86.38,10.33) .. (94.67,10.33) .. controls (102.95,10.33) and (109.67,17.05) .. (109.67,25.33) .. controls (109.67,33.62) and (102.95,40.33) .. (94.67,40.33) .. controls (86.38,40.33) and (79.67,33.62) .. (79.67,25.33) -- cycle ;
    \draw   (39.67,85.5) .. controls (39.67,77.31) and (46.31,70.67) .. (54.5,70.67) .. controls (62.69,70.67) and (69.33,77.31) .. (69.33,85.5) .. controls (69.33,93.69) and (62.69,100.33) .. (54.5,100.33) .. controls (46.31,100.33) and (39.67,93.69) .. (39.67,85.5) -- cycle ;
    \draw    (81.67,130) -- (61.67,100) ;
    \draw   (31.33,130.33) -- (48,180) -- (14.67,180) -- cycle ;
    \draw    (59.67,70.33) -- (89.67,40.33) ;
    \draw    (49.67,100.33) -- (31.67,130) ;
    \draw    (99.67,40.33) -- (129.67,70.33) ;
    \draw   (83,130.33) -- (99.67,180) -- (66.33,180) -- cycle ;
    \draw   (130,70.67) -- (146.67,120.33) -- (113.33,120.33) -- cycle ;
    \draw   (169.67,90) -- (211.67,90) -- (211.67,80) -- (239.67,100) -- (211.67,120) -- (211.67,110) -- (169.67,110) -- cycle ;
    \draw   (273.33,70.33) -- (290,120) -- (256.67,120) -- cycle ;
    \draw   (322.67,130.33) -- (339.33,180) -- (306,180) -- cycle ;
    \draw   (329.33,85.17) .. controls (329.33,76.97) and (335.97,70.33) .. (344.17,70.33) .. controls (352.36,70.33) and (359,76.97) .. (359,85.17) .. controls (359,93.36) and (352.36,100) .. (344.17,100) .. controls (335.97,100) and (329.33,93.36) .. (329.33,85.17) -- cycle ;
    \draw    (339.33,100) -- (323,130) ;
    \draw    (349.33,100) -- (363,130) ;
    \draw   (363.33,130.33) -- (380,180) -- (346.67,180) -- cycle ;
    \draw   (293,25.17) .. controls (293,16.97) and (299.64,10.33) .. (307.83,10.33) .. controls (316.03,10.33) and (322.67,16.97) .. (322.67,25.17) .. controls (322.67,33.36) and (316.03,40) .. (307.83,40) .. controls (299.64,40) and (293,33.36) .. (293,25.17) -- cycle ;
    \draw    (273,70) -- (303,40) ;
    \draw    (313,40) -- (343,70) ;
    
    \draw (90,20) node [anchor=north west][inner sep=0.75pt]   [align=left] {$\displaystyle y$};
    \draw (50,80) node [anchor=north west][inner sep=0.75pt]   [align=left] {$\displaystyle x$};
    \draw (25,152.33) node [anchor=north west][inner sep=0.75pt]   [align=left] {$\displaystyle A$};
    \draw (77.33,152.33) node [anchor=north west][inner sep=0.75pt]   [align=left] {$\displaystyle B$};
    \draw (123.67,92.67) node [anchor=north west][inner sep=0.75pt]   [align=left] {$\displaystyle C$};
    \draw (267.67,92.33) node [anchor=north west][inner sep=0.75pt]   [align=left] {$\displaystyle A$};
    \draw (316.33,152.33) node [anchor=north west][inner sep=0.75pt]   [align=left] {$\displaystyle B$};
    \draw (339,80) node [anchor=north west][inner sep=0.75pt]   [align=left] {$\displaystyle y$};
    \draw (357.67,152.33) node [anchor=north west][inner sep=0.75pt]   [align=left] {$\displaystyle C$};
    \draw (303,20) node [anchor=north west][inner sep=0.75pt]   [align=left] {$\displaystyle x$};
    \draw (21,72) node [anchor=north west][inner sep=0.75pt]   [align=left] {$\displaystyle d$};
    \draw (118,12) node [anchor=north west][inner sep=0.75pt]   [align=left] {$\displaystyle e$};
    \draw (11,132) node [anchor=north west][inner sep=0.75pt]   [align=left] {$\displaystyle a$};
    \draw (98,132) node [anchor=north west][inner sep=0.75pt]   [align=left] {$\displaystyle b$};
    \draw (141,72) node [anchor=north west][inner sep=0.75pt]   [align=left] {$\displaystyle c$};
    \draw (331,12) node [anchor=north west][inner sep=0.75pt]   [align=left] {$\displaystyle d+c+1=e$};
    \draw (251,70) node [anchor=north west][inner sep=0.75pt]   [align=left] {$\displaystyle a$};
    \draw (378,130) node [anchor=north west][inner sep=0.75pt]   [align=left] {$\displaystyle c$};
    \draw (301,132) node [anchor=north west][inner sep=0.75pt]   [align=left] {$\displaystyle b$};
    \draw (366,72) node [anchor=north west][inner sep=0.75pt]   [align=left] {$\displaystyle e-d+b$};
    
    \end{tikzpicture}
    \caption{Update of the sizes of subtrees stored in nodes during an edge rotation. Sizes are represented by letters $a$, $b$, $c$, $d$, and $e$}
    \label{fig:size-update}
\end{figure}
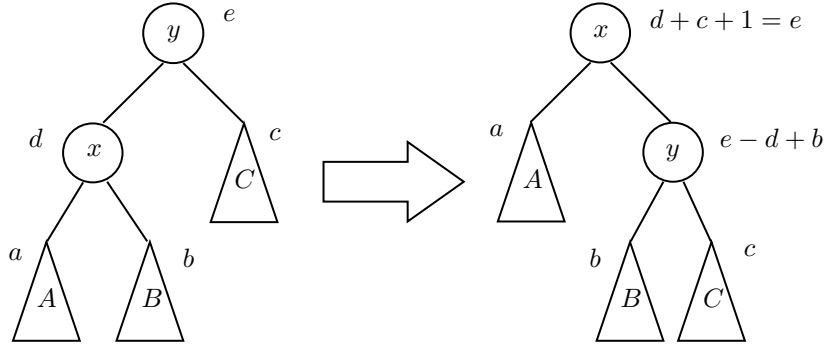


\subsection{Operations supported by the \FST}\label{sec:operations-FST}

We now describe in detail the operations that can be performed on the \FST\ data structure. 
In the following, recall that (1) in the course of each splay tree operation involving element $i$, $i$ must be splayed, (2) in the \FST, every splay tree corresponds to a cycle, keyed by position in the cycle according to some of its rotations, and (3) we have direct access to each element. Due to this direct access, we will refer to the splay tree operation {\em access}$(i,t)$ simply as {\em access}$(i)$ from now on. 

\subsubsection{Cycle rotation}

We will need a technical operation, namely for a cycle $C$ and an element $i$ in $C$, rotate $C$ such that $i$ becomes the last element in the splay tree $t$ that represents $C$. We call this operation {\em rotate$(C,i)$} and implement it as follows. 

If $i$ is the largest element of $t$, then there is nothing to do. Otherwise, let us write $C = (A, i, B)$, where $A$ is the sequence of elements that come in $C$ before $i$, and $B$ is that of those that come after $i$ in $C$. With {\em split}$(i,t)$, we turn $t$ into two trees: $t_1$, with $i$ in the root, $A$ in the left subtree and no right child; and $t_2$ representing $B$. Now we perform {\em join}$(t_2,t_1)$: this involves first splaying the largest element of $B$, say $B_{\max}$, in $t_2$, and then attaching $t_1$ (with root $i$) as the right child of $B_{\max}$. The trick is that since both $(A,i,B)$ and $(B,A,i)$ represent the same cycle $C$, the elements of $t_1$ are regarded as smaller than those of $t_2$ in the {\em split}-step but as larger in the {\em join}-step, so that we can correctly apply  {\em join$(t_2,t_1)$}. See Figure~\ref{fig:cycle-rotation} for an illustration. 

The operation consists of one splay operation---this is equivalent to {\em access}$(t,i)$ but skipping the search phase, since we have direct access to $i$---, one {\em split}, and one {\em join} operation. Therefore, the total amortized time is $\Oh(\log n)$.

\begin{figure}[t]
    \centering
    \begin{tikzpicture}[x=0.75pt,y=0.75pt,yscale=-1,xscale=1]
    
    \draw   (40,10) -- (70,103) -- (10,103) -- cycle ;
    \draw   (80,47.5) -- (116,47.5) -- (116,40) -- (140,55) -- (116,70) -- (116,62.5) -- (80,62.5) -- cycle ;
    \draw   (190,25) .. controls (190,16.72) and (196.72,10) .. (205,10) .. controls (213.28,10) and (220,16.72) .. (220,25) .. controls (220,33.28) and (213.28,40) .. (205,40) .. controls (196.72,40) and (190,33.28) .. (190,25) -- cycle ;
    \draw    (200,40) -- (180,60) ;
    \draw   (180,60) -- (200,120) -- (160,120) -- cycle ;
    \draw   (240,60) -- (260,120) -- (220,120) -- cycle ;
    \draw   (270,47.5) -- (306,47.5) -- (306,40) -- (330,55) -- (306,70) -- (306,62.5) -- (270,62.5) -- cycle ;
    \draw   (370,23.88) .. controls (370,14.97) and (377.22,7.75) .. (386.13,7.75) .. controls (395.03,7.75) and (402.25,14.97) .. (402.25,23.88) .. controls (402.25,32.78) and (395.03,40) .. (386.13,40) .. controls (377.22,40) and (370,32.78) .. (370,23.88) -- cycle ;
    \draw    (380,40) -- (360,60) ;
    \draw   (360,60) -- (380,120) -- (340,120) -- cycle ;
    \draw   (400,75) .. controls (400,66.72) and (406.72,60) .. (415,60) .. controls (423.28,60) and (430,66.72) .. (430,75) .. controls (430,83.28) and (423.28,90) .. (415,90) .. controls (406.72,90) and (400,83.28) .. (400,75) -- cycle ;
    \draw    (410,90) -- (400,110) ;
    \draw   (400,110) -- (420,170) -- (380,170) -- cycle ;
    \draw    (410,60) -- (390,40) ;
    
    \draw (33,57) node [anchor=north west][inner sep=0.75pt]   [align=left] {$C$};
    \draw (81,72) node [anchor=north west][inner sep=0.75pt]   [align=left] {split $\displaystyle (i,t)$};
    \draw (201,20) node [anchor=north west][inner sep=0.75pt]   [align=left] {$\displaystyle i$};
    \draw (173,92) node [anchor=north west][inner sep=0.75pt]   [align=left] {$A$};
    \draw (233,92) node [anchor=north west][inner sep=0.75pt]   [align=left] {$B$};
    \draw (271,72) node [anchor=north west][inner sep=0.75pt]   [align=left] {join $\displaystyle (t_2,t_1)$};
    \draw (373.5,20) node [anchor=north west][inner sep=0.75pt]   [align=left] {{\scriptsize $\text{B}_{\max}$}};
    \draw (351,92) node [anchor=north west][inner sep=0.75pt]   [align=left] {$B'$};
    \draw (411,70) node [anchor=north west][inner sep=0.75pt]   [align=left] {$\displaystyle i$};
    \draw (393,142) node [anchor=north west][inner sep=0.75pt]   [align=left] {$A$};

    \end{tikzpicture}
    \caption{Cycle-rotation operation: {\em rotate}$(C,i)$ moves $i$ to the right end of $C$. Before the rotation: $C = (A,i,B) = (A,i,B',B_{\max})$, and after the rotation $C = (B,A,i)$. }
	\label{fig:cycle-rotation}
\end{figure}
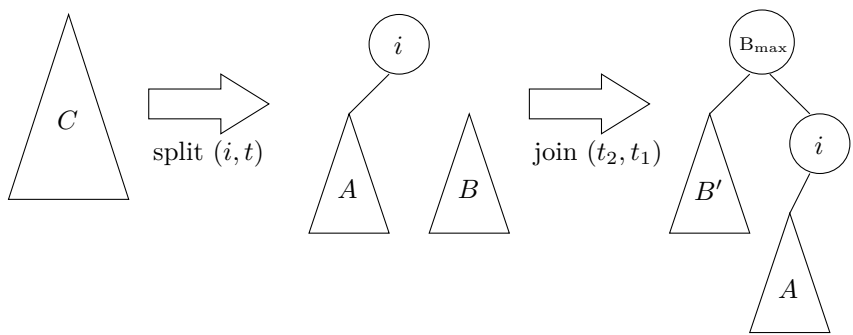
\tikzset{every picture/.style={line width=0.75pt}} 


\subsubsection{Return $\pi(i)$ and $\pi^{-1}(j)$}\label{sec:pi-tree}

Since $\pi$ is stored in form of its cycles, $\pi(i)$ is the element following $i$ in its cycle. This is the next larger element in the cycle, viewed cyclically; in other words, if $i$ is the largest node in its splay tree, then we have to return the smallest node; otherwise we have to return the successor node. 

To do this, we first splay $i$, moving it to the root of its tree. If it has no right child, then it is the largest node and we descend left from the root as long as there is a left child, thus returning the smallest node. Otherwise, from the root $i$ we move to the right child and then descend to the left as long as possible; this gives the successor of $i$ in the tree. In both cases, we splay the node corresponding to $\pi(i)$ after we find it. See Figure~\ref{fig:pi_i} for an example.

    \input{example_pi_i}

In terms of splay tree primitives, our operation is equivalent to {\em access}$(i)$ (search for $i$ in its tree, then splay $i$), followed by {\em access}$(\pi(i))$ (search for $\pi(i)$ in the splay tree, then splay $\pi(i)$), except that we are skipping the first part of the {\em access}$(i)$ operation due to our direct access to node $i$. Note that both searches are for the key (i.e., the position of the element in its cycle), and not for the element itself. 
The total amortized time is then $\Oh(\log n)$.

Finding $\pi^{-1}(j)$ is analogous, except that now we need the predecessor in the cycle rather than the successor. We splay node $j$, go left once, then keep going to the right, and finally splay the node that we find at the end of the operation. Again, we have a boundary case: if $i$ the leftmost element of the cycle, then we go straight to the rightmost element in the tree. 
The cost is $\Oh(\log n)$ amortized time, equivalent to {\em access}$(j)$ followed by {\em access}$(\pi^{-1}(j))$. 


\subsubsection{Return the number of cycles}

This operation takes constant time since we just return the contents of counter {\em cycles}. 


\subsubsection{Same-cycle query}

Given $i\neq j$, {\em samecycle}$(i,j)$ returns TRUE if and only if $i$ and $j$ are in the same cycle. We do this by first splaying $i$ (after direct access to it), thus moving it to the root of its tree, and then splaying $j$ (after direct access to it), thus moving it to the root of its tree. Now we check whether $i$ has a parent, and return TRUE if the answer is yes, since this means that $j$ has now replaced $i$ as root of their common tree. Since this is equivalent to {\em access}$(i)$ followed by {\em access}$(j)$ (in both cases skipping the search phase), the amortized running time is $\Oh(\log n)$. 


\subsubsection{Return $\pi^k(i)$ and $\pi^{-k}(i)$} \label{sec:pi-k}

Note that if $i$ is in position $j$ in its cycle $C$, then $\pi^k(i)$ is in position $(j+k) \bmod |C|$. In other words, we need to find the 
 $k'$th successor of $i$ in $C$, where $k' = k \bmod |C|$. After splaying $i$, the $k'$th successor is in the right subtree if $j+k' \leq |C|$, and in the left subtree otherwise. 

So we first splay $i$, then check if $k'\leq$ size(right($i$)). If so, then we have to return the $k'$-th smallest element in the right subtree, otherwise we need to return the $(k'-$size(right($i$)))-th smallest element in the whole tree. This can be done using a standard function on binary search trees: the function \text{min}($x,\ell$) returns the $\ell$th smallest element in the subtree rooted in $x$, defined recursively as follows: 
\begin{equation*}
    \text{min}(x,\ell) =
    \begin{cases}
        \text{return } x &\text{if } \ell = \text{size(left($x$))+1} \\
        \text{min}(\text{left}(x),\ell) &\text{if } \ell < \text{size(left($x$))+1} \\
        \text{min}(\text{right}(x),\ell-(\text{size(left($x$))+1})) &\text{if } \ell > \text{size(left($x$))+1} 
    \end{cases}
\end{equation*}

We return \text{min}(right($i$),$k'$) if $k'\leq$ size(right($i$)), and \text{min}($i$,$k'-$ size(right($i$)) otherwise. After finding $\pi^k(i)$, we splay it.
As the series of operations corresponds to {\em access}$(i)$ (without the search phase) and {\em access}($\pi^k(i)$), 
the total amortized running time is $\Oh(\log n)$. 

Returning $\pi^{-k}(i)$ is analogous, where we set $k' = -k \bmod |C|$, thus again we obtain $\Oh(\log n)$ amortized time. 


\subsubsection{Distance between two elements $i,j$}

The distance between two elements $i, j$ given by $\pi$ is $\dist_{\pi}(i,j) = \min\{d\geq 0 : \pi^d(i) = j\}$; in particular, $\dist_{\pi}(i,j) = \infty$ if no such $d$ exists.  
Clearly, $\dist_{\pi}(i,j)$ is finite if and only if $i$ and $j$ are in the same cycle. 
Note that $\dist_{\pi}$ is not symmetric. 

To compute $\dist_{\pi}(i,j)$, we first execute a query {\em samecycle}$(i,j)$ and return $\infty$ if the answer is FALSE. Otherwise, let $C$ be the cycle containing $i$ and $j$. We move $j$ to the end of $C$ with {\em rotate}$(C,j)$. Then we splay $i$ and return $\text{size(right}(i)))$. To see that this is correct, notice that splaying does not change the relative positions of the elements of the cycle. 

All the involved operations take $\Oh(\log n)$ amortized time ({\em samecycle, rotate, splay}), so this is also the cost of this operation. 

\subsubsection{Size of the cycle of an element $i$}

This can be computed by accessing $i$, splaying it so that it becomes the root of its splay tree, and then returning its subtree size. The operation takes $\Oh(\log n)$ amortized time.

\subsubsection{Update: transpositions $(\pi(i),\pi(j))$ and $(i,j)$}\label{sec:transp}

The application of the transposition $(\pi(i),\pi(j))$ results in a new permutation $\pi' = (\pi(i),\pi(j))\cdot\pi$. Lemma~\ref{lemma:cycles} 
gives the exact form of $\pi'$, depending on whether $i$ and $j$ are in the same cycle. 

So first we need to do a {\em samecycle$(i,j)$} check. If the answer is TRUE, let $C$ be the cycle containing both $i$ and $j$. We move $i$ to the end of $C$ with {\em rotate}$(C,i)$. Now we have $C = (A,j,B,i)$, and by Lemma~\ref{lemma:cycles}, $C$ will be split into $C_1 = (A,j)$ and $C_2 = (B,i)$, which can be implemented as the splay tree operation {\em split$(j,t)$}, where $t$ is the splay tree of $C$. 

Otherwise, let $C=(A,i,B)$ be the cycle containing $i$ and $C'=(D,j,E)$ the cycle containing $j$. We perform two cycle rotations, {\em rotate}$(C,i)$ and {\em rotate}$(C',j)$, moving $i$ to the end of $C$ and $j$ to the end of $C'$. Let the two trees be $t_1$ and $t_2$. The last step is to merge these two trees with {\em join$(t_1,t_2)$}, which results in $i$, the largest element of $t_1$, being splayed and $t_2$ being attached as $i$'s right child. The merged cycle represents $(B,A,i,E,D,j)$, in agreement with Lemma~\ref{lemma:cycles}. See Figure~\ref{fig:merge} for an illustration.

Note that in both cases, the {\em cycles} counter has to be updated: incremented by one if {\em samecycle$(i,j)$} is TRUE, since a {\em split}-operation is performed; and decremented by one otherwise, since a {\em join}-operation is performed. 

For the analysis, we have applied a same cycle query, followed by splitting a cycle or merging two cycles. Splitting a cycle consists of one cycle rotation and one {\em split}-operation. Merging two cycles, of two cycle rotations and one {\em join}-operation. Altogether we have, in both cases,  $\Oh(\log n)$ amortized time. 

\input{merge}



For $i, j \in \{1, 2, \ldots, n\}, i \neq j$, the transposition $(i,j)$ is equivalent to $(\pi(\pi^{-1}(i)),\pi(\pi^{-1}(j))) \circ \pi$. So we can access $\pi^{-1}(i)$ and $\pi^{-1}(j)$ in $\Oh(\log n)$ time, and then perform the transposition as in Section~\ref{sec:transp}. Overall, this takes $\Oh(\log n)$ amortized time.

\subsubsection{Update: Flips}

Our final operation, that is strongly related to the cycle structure of the permutations, is to reverse part of a cycle. Given a cycle $C = (i_1,i_2,\ldots,i_\ell)$ of $\pi$, the operation \emph{flip}$(i_r,i_t)$, with $r<t$, converts the cycle into 
$(i_1,\ldots,i_{r-1},i_t,i_{t-1},\ldots,i_r,i_{t+1},\ldots,i_\ell)$. That is, the 
direction of the cycle segment between $i_r$ and $i_t$ is reversed. It might also be that $r>t$, which yields $(i_t,i_{t-1},\ldots,i_1,i_\ell,\ldots,i_r,i_{t+1},\ldots,i_{r-1})$. 

Note that this operation is distinct from what is called a ``reversal'' in the area of genome rearrangements~\cite{Gus97}, because reversals act on the one-line representation of the permutation. 

Reflecting this operation in our current structure requires $\Oh(n)$ time, because potentially large parts of a splay tree need to be reversed. Instead, we extend our FST data structure so that the subtree-size component becomes {\em signed}. If the subtree-size field of a node $v$ is $-s$, with $s > 0$, this means that the actual subtree size is $s$ and that its subtree should be reversed, that is, its nodes should be read right-to-left. We will de-amortize the reversal work along future visits to the subtree, as explained soon. The extra space required is just $n$ bits for the signs, so it stays within $3n\log n + \Oh(n)$ bits.

In order to apply the described flip, we first perform {\em rotate}$(C,i_{t+1})$ to make sure that $i_{r-1}$ is behind it in the tree. We now splay $i_{t+1}$ and then $i_{r-1}$. After this, $i_{r-1}$ is the root of the tree, $i_{t+1}$ is its right child, and the left child $v$ of $i_{t+1}$ is the subtree with all the elements from $i_r$ to $i_t$. We then toggle the sign of the subtree-size field of $v$ and finish.

This takes $\Oh(\log n)$ amortized time because it builds on a constant number of other operations we have already analyzed. We must, however, adapt all the other operations to handle negative subtree-size fields. 

The general solution is that {\em every time we access a tree node, if its subtree-size field is negative, we toggle it, exchange the left and right children, and toggle their subtree-size fields, before proceeding with any other action.} Precisely, we define the primitive {\em fix}$(x)$ as follows: (i) if $\text{size}(x) < 0$, then (ii) toggle $\text{size}(x) \gets -\text{size}(x)$, $\text{size(left}(x)) \gets -\text{size(left}(x))$, $\text{size(right}(x)) \gets -\text{size(right}(x))$, and (iii) swap left$(x)$ with right$(x)$. See Figure~\ref{fig:fix-operations} for an example of {\em fix}$(x)$.
We then alter the splay and tree traversal operations as follows:
\begin{itemize}
\item Before performing a rotation on node $x$ during a splay, we fix the grandparent of $x$, then its parent, and then $x$. The order is important because fixing a node may change the signs of its children. The other subtrees involved in the rotations can be left unfixed. Then we perform the zig, zig-zig, or zig-zag to move $x$ upwards, as it corresponds.
\item When we descend in the tree from a node $x$ (e.g., in the function $\min(x,\ell)$), we perform {\em fix}$(x)$ before processing it.
\end{itemize}

\begin{figure}
    \centering
    
\resizebox{\textwidth}{!}{
    \tikzset{every picture/.style={line width=0.75pt}} 
    
    \begin{tikzpicture}[x=0.75pt,y=0.75pt,yscale=-1,xscale=1]
    
    \draw   (70,18.5) .. controls (70,12.15) and (75.15,7) .. (81.5,7) .. controls (87.85,7) and (93,12.15) .. (93,18.5) .. controls (93,24.85) and (87.85,30) .. (81.5,30) .. controls (75.15,30) and (70,24.85) .. (70,18.5) -- cycle ;
    \draw   (50,51) -- (70,120) -- (30,120) -- cycle ;
    \draw    (80,30) -- (50,51) ;
    \draw   (110,51) -- (130,120) -- (90,120) -- cycle ;
    \draw    (81.5,30) -- (110,51) ;
    \draw   (150,60) -- (192,60) -- (192,50) -- (220,70) -- (192,90) -- (192,80) -- (150,80) -- cycle ;
    \draw   (270,18.5) .. controls (270,12.15) and (275.15,7) .. (281.5,7) .. controls (287.85,7) and (293,12.15) .. (293,18.5) .. controls (293,24.85) and (287.85,30) .. (281.5,30) .. controls (275.15,30) and (270,24.85) .. (270,18.5) -- cycle ;
    \draw   (250,51) -- (270,120) -- (230,120) -- cycle ;
    \draw    (280,30) -- (250,51) ;
    \draw   (310,51) -- (330,120) -- (290,120) -- cycle ;
    \draw    (281.5,30) -- (310,51) ;
    \draw   (351,60) -- (393,60) -- (393,50) -- (421,70) -- (393,90) -- (393,80) -- (351,80) -- cycle ;
    \draw   (471,18.5) .. controls (471,12.15) and (476.15,7) .. (482.5,7) .. controls (488.85,7) and (494,12.15) .. (494,18.5) .. controls (494,24.85) and (488.85,30) .. (482.5,30) .. controls (476.15,30) and (471,24.85) .. (471,18.5) -- cycle ;
    \draw   (451,51) -- (471,120) -- (431,120) -- cycle ;
    \draw    (481,30) -- (451,51) ;
    \draw   (511,51) -- (531,120) -- (491,120) -- cycle ;
    \draw    (482.5,30) -- (511,51) ;
    
    \draw (76,14) node [anchor=north west][inner sep=0.75pt]   [align=left] {$\displaystyle x$};
    \draw (44,91) node [anchor=north west][inner sep=0.75pt]   [align=left] {$\displaystyle A$};
    \draw (104,91) node [anchor=north west][inner sep=0.75pt]   [align=left] {$\displaystyle B$};
    \draw (33,42) node [anchor=north west][inner sep=0.75pt]  [color={rgb, 255:red, 255; green, 0; blue, 0 }  ,opacity=1 ] [align=left] {$\displaystyle a$};
    \draw (118,42) node [anchor=north west][inner sep=0.75pt]  [color={rgb, 255:red, 255; green, 0; blue, 0 }  ,opacity=1 ] [align=left] {$\displaystyle b$};
    \draw (100,10) node [anchor=north west][inner sep=0.75pt]  [color={rgb, 255:red, 255; green, 0; blue, 0 }  ,opacity=1 ] [align=left] {$\displaystyle -c$};
    \draw (151,95) node [anchor=north west][inner sep=0.75pt]  [font=\footnotesize] [align=left] {toggle sizes};
    \draw (276,14) node [anchor=north west][inner sep=0.75pt]   [align=left] {$\displaystyle x$};
    \draw (244,91) node [anchor=north west][inner sep=0.75pt]   [align=left] {$\displaystyle A$};
    \draw (304,91) node [anchor=north west][inner sep=0.75pt]   [align=left] {$\displaystyle B$};
    \draw (229,42) node [anchor=north west][inner sep=0.75pt]  [color={rgb, 255:red, 255; green, 0; blue, 0 }  ,opacity=1 ] [align=left] {$\displaystyle -a$};
    \draw (318,42) node [anchor=north west][inner sep=0.75pt]  [color={rgb, 255:red, 255; green, 0; blue, 0 }  ,opacity=1 ] [align=left] {$\displaystyle -b$};
    \draw (300,10) node [anchor=north west][inner sep=0.75pt]  [color={rgb, 255:red, 255; green, 0; blue, 0 }  ,opacity=1 ] [align=left] {$\displaystyle c$};
    \draw (352,95) node [anchor=north west][inner sep=0.75pt]  [font=\footnotesize] [align=left] {swap left-right};
    \draw (477,14) node [anchor=north west][inner sep=0.75pt]   [align=left] {$\displaystyle x$};
    \draw (445,91) node [anchor=north west][inner sep=0.75pt]   [align=left] {$\displaystyle B$};
    \draw (505,91) node [anchor=north west][inner sep=0.75pt]   [align=left] {$\displaystyle A$};
    \draw (430,42) node [anchor=north west][inner sep=0.75pt]  [color={rgb, 255:red, 255; green, 0; blue, 0 }  ,opacity=1 ] [align=left] {$\displaystyle -b$};
    \draw (519,42) node [anchor=north west][inner sep=0.75pt]  [color={rgb, 255:red, 255; green, 0; blue, 0 }  ,opacity=1 ] [align=left] {$\displaystyle -a$};
    \draw (501,10) node [anchor=north west][inner sep=0.75pt]  [color={rgb, 255:red, 255; green, 0; blue, 0 }  ,opacity=1 ] [align=left] {$\displaystyle c$};

    \end{tikzpicture}
}
    \caption{Example of fix operation on node $x$. Note that $a$ and/or $b$ could have been negative.}
    \label{fig:fix-operations}
\end{figure}
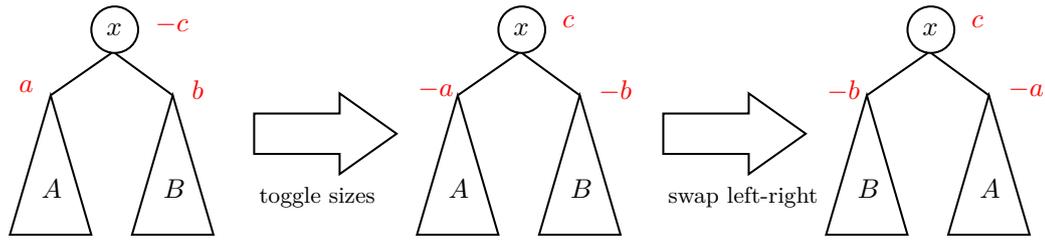

Note that {\em fix} takes constant time and does not change the potential function $\phi$, so no time complexities change due to our adjustments. All the structural changes to the splay tree are performed over traditional (i.e., fixed) nodes, so no algorithm needs further changes.


\section{Comparison with other data structures} \label{sec:comparison}

We now compare the running times of different operations of the \FST, to four baselines: (1) one array for the one-line notation, (2) two arrays for the one-lines of the permutation and its inverse, (3) the dynamic sequence representation of Munro and Nekrich \cite{MunroN15}, and (4) the static structure of Munro et al.~\cite{MRRR12}.

The \FST\ takes $3n\log n+\Oh(n)$ bits of space for the matrix $M$ and the counter {\em cycles}. The one-line notation is an integer array, taking $n\lceil\log n\rceil = n\log n + \Oh(n)$ bits. The permutation and its inverse in one-line notation require $2n\lceil\log n\rceil = 2n\log n + \Oh(n)$ bits. The dynamic sequence representation takes $n\log n + o(n\log n)$ bits. Finally, we will use the variant of the structure of Munro et al.\ that uses $(1+\epsilon)n\log n$ bits, for any constant $\epsilon > 0$.

In the array for the one-line notation (1), the update by transposition $(\pi(i),\pi(j))$ and returning $\pi(i)$ take constant time. To enable also the update by $(i,j)$ and returning $\pi^{-1}(i)$ in constant time, we need also the inverse permutation stored in another array (2); otherwise we need $\Oh(c)$ time, on a cycle of length $c$, to find the inverses of $i$ and $j$. For both structures, returning the number of cycles after an update operation is not constant, because at each update we lose the information about cycles. Even if we used a counter for storing the number of cycles, as in the forest of splay trees, we would need $\Oh(n)$ time to update it. The main problem is how to answer the same-cycle query during each update of the permutation, which takes $\Oh(c)$ time. Also, computing the distance between two elements $i,j$ takes $\Oh(c)$ time. A flip can also be applied in $\Oh(c)$ time by following the cycle.

The dynamic sequence representation (3) provides access to both $\pi(i)$ and $\pi^{-1}(j)$ within less space, in $\Oh(\frac{\log n}{\log\log n})$ time. It can also implement both transpositions by means of inserting and deleting two pairs of symbols in the sequence, within the same time complexity. The first pair would delete $i$ (or $\pi(i)$) and insert $j$ (or $\pi(j)$) in its position, while the second couple works symmetrically.
Just as the simpler preceding structures, however, we cannot answer queries related to cycles in less than $\Oh(c)$ steps, each taking $\Oh(\frac{\log n}{\log\log n})$ time.

Munro et al.~\cite{MRRR12} present a representation specialized in answering powers of permutations (4). From its description in Section~\ref{sec:related}, it follows that they support $\pi(i)$, $\pi^{-1}(j)$, $\pi^k(k)$, and $\pi^{-k}(j)$, all in time $\Oh(1/\epsilon)$. It is not hard to see that this structure can also determine if $i$ and $j$ are in the same cycle, by checking whether the 1s preceding and following $\rho^{-1}(i)$ and $\rho^{-1}(j)$ are the same; their distance in the cycle and the size of their cycle are also trivially found. The number of cycles (or 1s in the bitvector) can be stored to answer that query in constant time. Overall, they support all the queries in time $O(1/\epsilon)$. The problem is that this structure is static, so any update requires reconstructing the whole structure in $\Oh(n)$ time. 

Table~\ref{table:operations} summarizes the space and running times. The FST is the only dynamic structure that efficiently handles  queries about the cycle structure of the underlying permutation.

\begin{table}[t]
\footnotesize
\begin{tabular}{ r | c | c | c | c | c}
	& \FST\ & One-line & One-line$+$inv. & Dyn.\ Seq. & Static \\
	\hline
Space (bits) & $3n\log n$ & $n\log n$ & $2n\log n$ & $n \log n$ & $(1+\epsilon)n\log n$ \\
	\hline
	Return $\pi(i)$ & am.\ $\Oh(\log n)$ & $\Oh(1)$ & $\Oh(1)$ & $\Oh(\frac{\log n}{\log\log n})$ & $\Oh(1/\epsilon)$ \\
	Return $\pi^{-1}(j)$ & am.\ $\Oh(\log n)$ & $\Oh(c)$ & $\Oh(1)$ & $\Oh(\frac{\log n}{\log\log n})$ & $\Oh(1/\epsilon)$ \\
	Return $\pi^k(i)$ & am.\ $\Oh(\log n)$ & $\Oh(k)$ & $\Oh(k)$ & $\Oh(k \cdot \frac{\log n}{\log\log n})$ & $\Oh(1/\epsilon)$ \\
	Return $\pi^{-k}(j)$ & am.\ $\Oh(\log n)$ & $\Oh(c)$ & $\Oh(k)$  & $\Oh(k \cdot \frac{\log n}{\log\log n})$ & $\Oh(1/\epsilon)$ \\
 \hline
	  Number of cycles & $\Oh(1)$ & $\Oh(n)$ & $\Oh(n)$ & $\Oh(n)$ & $\Oh(1)$ \\
        Size of $i$'s cycle & am.\ $\Oh(\log n)$ & $\Oh(c)$ & $\Oh(c)$ & $\Oh(c\cdot \frac{\log n}{\log\log n})$ & $\Oh(1/\epsilon)$ \\	
        Same cycle $i,j$ & am.\ $\Oh(\log n)$ & $\Oh(c)$ & $\Oh(c)$ & $\Oh(c\cdot \frac{\log n}{\log\log n})$ & $\Oh(1/\epsilon)$ \\
	Cycle distance $i,j$ & am.\ $\Oh(\log n)$ & $\Oh(c)$ & $\Oh(c)$ & $\Oh(c\cdot \frac{\log n}{\log\log n})$ & $\Oh(1/\epsilon)$ \\
 \hline
	Transp.\ $(\pi(i),\pi(j))$ & am.\ $\Oh(\log n)$ & $\Oh(1)$ & $\Oh(1)$ & $\Oh(\frac{\log n}{\log\log n})$ & $\Oh(n)$ \\
	Transp.\ $(i, j)$ & am.\ $\Oh(\log n)$ & $\Oh(c)$ & $\Oh(1)$ & $\Oh(\frac{\log n}{\log\log n})$ & $\Oh(n)$ \\
        Flip $i,j$     & am.\ $\Oh(\log n)$ & $\Oh(c)$ & $\Oh(c)$ & $\Oh(c \cdot \frac{\log n}{\log\log n})$ & $\Oh(n)$ \\
    \hline	
\end{tabular}
\caption{Comparison between the forest of splay trees and several baselines to handle queries and updates on permutations. We use ``am.'' as a shorthand for ``amortized'', $c$ for the length of the cycle where $i$ belongs ($c=n$ in the worst case), and write only the leading term of the space.}
\label{table:operations}
\end{table}


\section{Conclusion} \label{sec:conclusion}

We have introduced a new dynamic data structure to represent permutations, the forest of splay trees (FST), which is unique in supporting various operations related to the cycle structure of the permutation, while permitting to perform arbitrary transpositions on it and flips in the cycles. Concretely, for a permutation on $n$, the FST is built in $\Oh(n)$ time, and then supports a number of queries and updates in $\Oh(\log n)$ amortized time each. No structure we know of supports both kinds of queries/updates in $o(n)$ time.

A future direction to extend the FST is to incorporate other queries and updates, motivated by applications. 
Another interesting direction is to extend the scope of FSTs from permutations to general functions in $[1,n]$, in the lines of the representation of Munro et al.~\cite{MRRR12}. They combine cycles with ordinal trees in order to support the operations $f^k(i)$ and $f^{-k}(\{j\})$ in optimal time, but again, do not support updates on $f$.

\bibliography{biblio}

\appendix 

\section*{APPENDIX}

\section{Details on splay trees}

These chains of rotations have three different names based on the relative position of a node $x$ w.r.t. its parent and grandparent. If the parent of $x$ is the root, then only one rotation is required to move $x$ to the root (\textit{zig}). If both $x$ and the parent of $x$ are right children of their parent, or if both are left children of their parent, then two rotations are performed in sequence: first between the parent of $x$ and the grandparent of $x$, then between $x$ and its parent (\textit{zig-zig}, see Figure~\ref{fig:zig-zig}). The last possibility is that $x$ is a right child and its parent is a left child of the grandparent of $x$, or that $x$ is a left child and its parent is a right child. Two rotations with different direction are concatenated: this time we first perform a rotation between $x$ and its parent, then between $x$ and its former grandparent (\textit{zig-zag}, see Figure~\ref{fig:zig-zag}). 

\begin{figure}[ht]

\resizebox{\textwidth}{!}{
\tikzset{every picture/.style={line width=0.75pt}} 

\begin{tikzpicture}[x=0.75pt,y=0.75pt,yscale=-1,xscale=1]

\draw   (86.33,15.17) .. controls (86.33,6.97) and (92.97,0.33) .. (101.17,0.33) .. controls (109.36,0.33) and (116,6.97) .. (116,15.17) .. controls (116,23.36) and (109.36,30) .. (101.17,30) .. controls (92.97,30) and (86.33,23.36) .. (86.33,15.17) -- cycle ;
\draw   (56,75) .. controls (56,66.72) and (62.72,60) .. (71,60) .. controls (79.28,60) and (86,66.72) .. (86,75) .. controls (86,83.28) and (79.28,90) .. (71,90) .. controls (62.72,90) and (56,83.28) .. (56,75) -- cycle ;
\draw   (26.33,135.17) .. controls (26.33,126.97) and (32.97,120.33) .. (41.17,120.33) .. controls (49.36,120.33) and (56,126.97) .. (56,135.17) .. controls (56,143.36) and (49.36,150) .. (41.17,150) .. controls (32.97,150) and (26.33,143.36) .. (26.33,135.17) -- cycle ;
\draw    (96,30) -- (76,60) ;
\draw    (96,120) -- (76,90) ;
\draw   (16.67,180.33) -- (33.33,230) -- (0,230) -- cycle ;
\draw    (46,120) -- (66,90) ;
\draw    (106,30) -- (126,60) ;
\draw   (126.33,60.33) -- (143,110) -- (109.67,110) -- cycle ;
\draw    (36,150) -- (16,180) ;
\draw    (46,150) -- (66,180) ;
\draw   (96.33,120.33) -- (113,170) -- (79.67,170) -- cycle ;
\draw   (66.33,180.33) -- (83,230) -- (49.67,230) -- cycle ;
\draw   (163,110) -- (205,110) -- (205,100) -- (233,120) -- (205,140) -- (205,130) -- (163,130) -- cycle ;
\draw   (268.33,124.83) .. controls (268.33,116.64) and (274.97,110) .. (283.17,110) .. controls (291.36,110) and (298,116.64) .. (298,124.83) .. controls (298,133.03) and (291.36,139.67) .. (283.17,139.67) .. controls (274.97,139.67) and (268.33,133.03) .. (268.33,124.83) -- cycle ;
\draw    (278,140) -- (258,170) ;
\draw    (288,140) -- (308,170) ;
\draw   (257.67,169.33) -- (274.33,219) -- (241,219) -- cycle ;
\draw   (308.33,170) -- (325,219.67) -- (291.67,219.67) -- cycle ;
\draw   (368,124.83) .. controls (368,116.64) and (374.64,110) .. (382.83,110) .. controls (391.03,110) and (397.67,116.64) .. (397.67,124.83) .. controls (397.67,133.03) and (391.03,139.67) .. (382.83,139.67) .. controls (374.64,139.67) and (368,133.03) .. (368,124.83) -- cycle ;
\draw    (378,140) -- (358,170) ;
\draw    (388,140) -- (408,170) ;
\draw   (357.67,170) -- (374.33,219.67) -- (341,219.67) -- cycle ;
\draw   (408.33,170) -- (425,219.67) -- (391.67,219.67) -- cycle ;
\draw   (320,64.83) .. controls (320,56.64) and (326.64,50) .. (334.83,50) .. controls (343.03,50) and (349.67,56.64) .. (349.67,64.83) .. controls (349.67,73.03) and (343.03,79.67) .. (334.83,79.67) .. controls (326.64,79.67) and (320,73.03) .. (320,64.83) -- cycle ;
\draw    (330,80) -- (290,110) ;
\draw    (340,80) -- (380,110) ;
\draw   (429,110.33) -- (471,110.33) -- (471,100.33) -- (499,120.33) -- (471,140.33) -- (471,130.33) -- (429,130.33) -- cycle ;
\draw   (541,15.17) .. controls (541,6.97) and (547.64,0.33) .. (555.83,0.33) .. controls (564.03,0.33) and (570.67,6.97) .. (570.67,15.17) .. controls (570.67,23.36) and (564.03,30) .. (555.83,30) .. controls (547.64,30) and (541,23.36) .. (541,15.17) -- cycle ;
\draw    (551,30) -- (531,60) ;
\draw    (561,30) -- (581,60) ;
\draw   (530.67,60.33) -- (547.33,110) -- (514,110) -- cycle ;
\draw   (560.67,120.33) -- (577.33,170) -- (544,170) -- cycle ;
\draw   (603,135.17) .. controls (603,126.97) and (609.64,120.33) .. (617.83,120.33) .. controls (626.03,120.33) and (632.67,126.97) .. (632.67,135.17) .. controls (632.67,143.36) and (626.03,150) .. (617.83,150) .. controls (609.64,150) and (603,143.36) .. (603,135.17) -- cycle ;
\draw    (613,150.33) -- (593,180.33) ;
\draw    (623,150.33) -- (643,180.33) ;
\draw   (592.67,180.33) -- (609.33,230) -- (576,230) -- cycle ;
\draw   (643.33,180.33) -- (660,230) -- (626.67,230) -- cycle ;
\draw   (571,75.17) .. controls (571,66.97) and (577.64,60.33) .. (585.83,60.33) .. controls (594.03,60.33) and (600.67,66.97) .. (600.67,75.17) .. controls (600.67,83.36) and (594.03,90) .. (585.83,90) .. controls (577.64,90) and (571,83.36) .. (571,75.17) -- cycle ;
\draw    (591,90.33) -- (611,120) ;
\draw    (581,90) -- (561,120) ;

\draw (97.33,9) node [anchor=north west][inner sep=0.75pt]   [align=left] {$g$};
\draw (67.67,69) node [anchor=north west][inner sep=0.75pt]   [align=left] {$p$};
\draw (37.33,129) node [anchor=north west][inner sep=0.75pt]   [align=left] {$x$};
\draw (10.33,202.33) node [anchor=north west][inner sep=0.75pt]   [align=left] {$A$};
\draw (120,82.33) node [anchor=north west][inner sep=0.75pt]   [align=left] {$D$};
\draw (90.67,142.33) node [anchor=north west][inner sep=0.75pt]   [align=left] {$C$};
\draw (60,202.33) node [anchor=north west][inner sep=0.75pt]   [align=left] {$B$};
\draw (161,142) node [anchor=north west][inner sep=0.75pt]   [align=left] {rotate p-g};
\draw (279.33,119) node [anchor=north west][inner sep=0.75pt]   [align=left] {$x$};
\draw (250,191.33) node [anchor=north west][inner sep=0.75pt]   [align=left] {$A$};
\draw (302,192) node [anchor=north west][inner sep=0.75pt]   [align=left] {$B$};
\draw (379,119) node [anchor=north west][inner sep=0.75pt]   [align=left] {$g$};
\draw (352,192) node [anchor=north west][inner sep=0.75pt]   [align=left] {$C$};
\draw (402,192) node [anchor=north west][inner sep=0.75pt]   [align=left] {$D$};
\draw (331,59) node [anchor=north west][inner sep=0.75pt]   [align=left] {$p$};
\draw (427,142.33) node [anchor=north west][inner sep=0.75pt]   [align=left] {rotate p-g};
\draw (552,9) node [anchor=north west][inner sep=0.75pt]   [align=left] {$x$};
\draw (525,82.33) node [anchor=north west][inner sep=0.75pt]   [align=left] {$A$};
\draw (554.33,142.33) node [anchor=north west][inner sep=0.75pt]   [align=left] {$B$};
\draw (614,129) node [anchor=north west][inner sep=0.75pt]   [align=left] {$g$};
\draw (587,202.33) node [anchor=north west][inner sep=0.75pt]   [align=left] {$C$};
\draw (637,202.33) node [anchor=north west][inner sep=0.75pt]   [align=left] {$D$};
\draw (582,69) node [anchor=north west][inner sep=0.75pt]   [align=left] {$p$};

\end{tikzpicture}
}

\caption{Example of zig-zig subroutine applied on node $x$.}
\label{fig:zig-zig}
\end{figure}
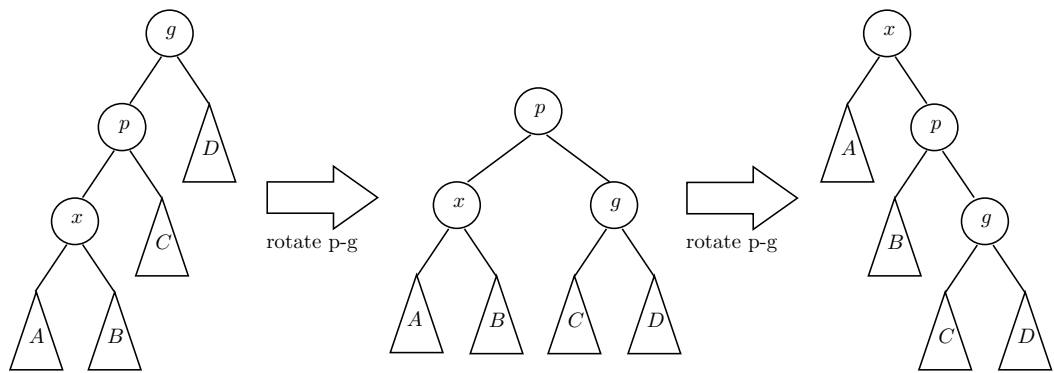

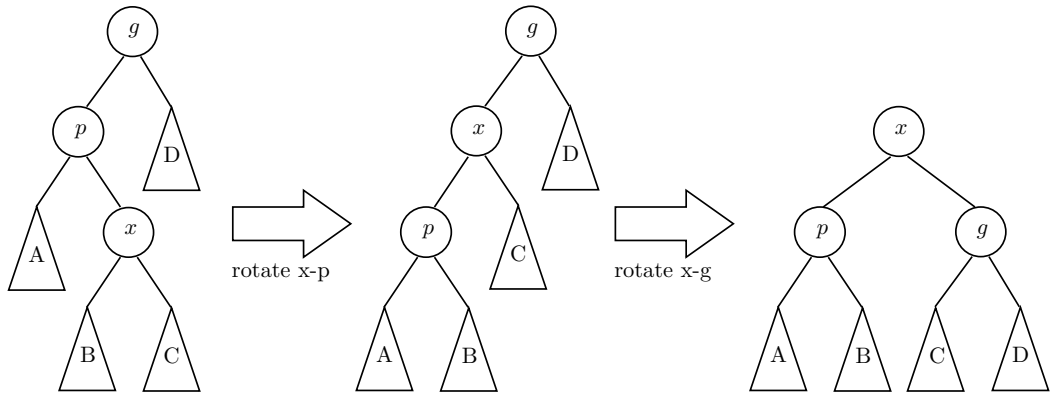
\begin{figure}

\resizebox{\textwidth}{!}{
\tikzset{every picture/.style={line width=0.75pt}} 

\begin{tikzpicture}[x=0.75pt,y=0.75pt,yscale=-1,xscale=1]

\draw   (58.67,15.17) .. controls (58.67,6.97) and (65.31,0.33) .. (73.5,0.33) .. controls (81.69,0.33) and (88.33,6.97) .. (88.33,15.17) .. controls (88.33,23.36) and (81.69,30) .. (73.5,30) .. controls (65.31,30) and (58.67,23.36) .. (58.67,15.17) -- cycle ;
\draw   (26.33,75) .. controls (26.33,66.72) and (33.05,60) .. (41.33,60) .. controls (49.62,60) and (56.33,66.72) .. (56.33,75) .. controls (56.33,83.28) and (49.62,90) .. (41.33,90) .. controls (33.05,90) and (26.33,83.28) .. (26.33,75) -- cycle ;
\draw   (56.33,135.17) .. controls (56.33,126.97) and (62.97,120.33) .. (71.17,120.33) .. controls (79.36,120.33) and (86,126.97) .. (86,135.17) .. controls (86,143.36) and (79.36,150) .. (71.17,150) .. controls (62.97,150) and (56.33,143.36) .. (56.33,135.17) -- cycle ;
\draw    (68.33,30) -- (46.33,60) ;
\draw    (66.33,120) -- (46.33,90) ;
\draw   (16.67,120) -- (33.33,169.67) -- (0,169.67) -- cycle ;
\draw    (16.33,120) -- (36.33,90) ;
\draw    (78.33,30) -- (96.33,60) ;
\draw   (96.67,60.33) -- (113.33,110) -- (80,110) -- cycle ;
\draw    (66.33,150) -- (46.33,180) ;
\draw    (76.33,150) -- (96.33,180) ;
\draw   (46.67,180) -- (63.33,229.67) -- (30,229.67) -- cycle ;
\draw   (96.67,180.33) -- (113.33,230) -- (80,230) -- cycle ;
\draw   (133,120) -- (175,120) -- (175,110) -- (203,130) -- (175,150) -- (175,140) -- (133,140) -- cycle ;
\draw   (295,14.83) .. controls (295,6.64) and (301.64,0) .. (309.83,0) .. controls (318.03,0) and (324.67,6.64) .. (324.67,14.83) .. controls (324.67,23.03) and (318.03,29.67) .. (309.83,29.67) .. controls (301.64,29.67) and (295,23.03) .. (295,14.83) -- cycle ;
\draw   (232.67,135) .. controls (232.67,126.72) and (239.38,120) .. (247.67,120) .. controls (255.95,120) and (262.67,126.72) .. (262.67,135) .. controls (262.67,143.28) and (255.95,150) .. (247.67,150) .. controls (239.38,150) and (232.67,143.28) .. (232.67,135) -- cycle ;
\draw   (262.67,74.5) .. controls (262.67,66.31) and (269.31,59.67) .. (277.5,59.67) .. controls (285.69,59.67) and (292.33,66.31) .. (292.33,74.5) .. controls (292.33,82.69) and (285.69,89.33) .. (277.5,89.33) .. controls (269.31,89.33) and (262.67,82.69) .. (262.67,74.5) -- cycle ;
\draw    (304.67,29.67) -- (282.67,59.67) ;
\draw    (272.67,180) -- (252.67,150) ;
\draw   (223,180) -- (239.67,229.67) -- (206.33,229.67) -- cycle ;
\draw    (222.67,180) -- (242.67,150) ;
\draw    (314.67,29.67) -- (332.67,59.67) ;
\draw   (333.33,60) -- (350,109.67) -- (316.67,109.67) -- cycle ;
\draw    (272.67,90) -- (252.67,120) ;
\draw    (282.67,89.33) -- (302.67,119.33) ;
\draw   (273,180.33) -- (289.67,230) -- (256.33,230) -- cycle ;
\draw   (562,134.83) .. controls (562,126.64) and (568.64,120) .. (576.83,120) .. controls (585.03,120) and (591.67,126.64) .. (591.67,134.83) .. controls (591.67,143.03) and (585.03,149.67) .. (576.83,149.67) .. controls (568.64,149.67) and (562,143.03) .. (562,134.83) -- cycle ;
\draw   (466.33,135) .. controls (466.33,126.72) and (473.05,120) .. (481.33,120) .. controls (489.62,120) and (496.33,126.72) .. (496.33,135) .. controls (496.33,143.28) and (489.62,150) .. (481.33,150) .. controls (473.05,150) and (466.33,143.28) .. (466.33,135) -- cycle ;
\draw   (513.33,74.83) .. controls (513.33,66.64) and (519.97,60) .. (528.17,60) .. controls (536.36,60) and (543,66.64) .. (543,74.83) .. controls (543,83.03) and (536.36,89.67) .. (528.17,89.67) .. controls (519.97,89.67) and (513.33,83.03) .. (513.33,74.83) -- cycle ;
\draw    (571.67,149.67) -- (549.67,179.67) ;
\draw    (506.33,180) -- (486.33,150) ;
\draw   (456.67,180) -- (473.33,229.67) -- (440,229.67) -- cycle ;
\draw    (456.33,180) -- (476.33,150) ;
\draw    (581.67,149.67) -- (599.67,179.67) ;
\draw   (600.33,180) -- (617,229.67) -- (583.67,229.67) -- cycle ;
\draw    (523.33,90) -- (483.33,120) ;
\draw    (533.33,89.67) -- (573.33,120) ;
\draw   (506.67,180.33) -- (523.33,230) -- (490,230) -- cycle ;
\draw   (360,120) -- (402,120) -- (402,110) -- (430,130) -- (402,150) -- (402,140) -- (360,140) -- cycle ;
\draw   (550,180.33) -- (566.67,230) -- (533.33,230) -- cycle ;
\draw   (302.33,119.33) -- (319,169) -- (285.67,169) -- cycle ;

\draw (69.67,9) node [anchor=north west][inner sep=0.75pt]   [align=left] {$g$};
\draw (38,69) node [anchor=north west][inner sep=0.75pt]   [align=left] {$p$};
\draw (67.33,129) node [anchor=north west][inner sep=0.75pt]   [align=left] {$x$};
\draw (10.33,142) node [anchor=north west][inner sep=0.75pt]   [align=left] {A};
\draw (90.33,82.33) node [anchor=north west][inner sep=0.75pt]   [align=left] {D};
\draw (41,202) node [anchor=north west][inner sep=0.75pt]   [align=left] {B};
\draw (90.33,202.33) node [anchor=north west][inner sep=0.75pt]   [align=left] {C};
\draw (131,152) node [anchor=north west][inner sep=0.75pt]   [align=left] {rotate x-p};
\draw (306,9) node [anchor=north west][inner sep=0.75pt]   [align=left] {$g$};
\draw (244.33,129) node [anchor=north west][inner sep=0.75pt]   [align=left] {$p$};
\draw (273.67,69) node [anchor=north west][inner sep=0.75pt]   [align=left] {$x$};
\draw (216.67,202) node [anchor=north west][inner sep=0.75pt]   [align=left] {A};
\draw (327,82) node [anchor=north west][inner sep=0.75pt]   [align=left] {D};
\draw (267.33,202.33) node [anchor=north west][inner sep=0.75pt]   [align=left] {B};
\draw (573,129) node [anchor=north west][inner sep=0.75pt]   [align=left] {$g$};
\draw (478,129) node [anchor=north west][inner sep=0.75pt]   [align=left] {$p$};
\draw (524.33,69) node [anchor=north west][inner sep=0.75pt]   [align=left] {$x$};
\draw (450.33,202) node [anchor=north west][inner sep=0.75pt]   [align=left] {A};
\draw (594,202) node [anchor=north west][inner sep=0.75pt]   [align=left] {D};
\draw (501,202.33) node [anchor=north west][inner sep=0.75pt]   [align=left] {B};
\draw (358,152) node [anchor=north west][inner sep=0.75pt]   [align=left] {rotate x-g};
\draw (544.67,202.67) node [anchor=north west][inner sep=0.75pt]   [align=left] {C};
\draw (296,141.33) node [anchor=north west][inner sep=0.75pt]   [align=left] {C};

\end{tikzpicture}
}

\caption{Example of zig-zag subroutine applied on node $x$.}
\label{fig:zig-zag}

\end{figure}

We now explain two other operations on splay trees, namely {\em split} and {\em join}. 

\paragraph*{Split operation on splay trees} Given a splay tree $t$ and an element $x$ we can split $t$ in two different trees, i.e. $t_1$ and $t_2$ s.t. every element in $t_1$ is less than or equal to $x$ and every element in $t_2$ is greater than $x$. This can be done by splaying $x$ and removing the right edge from $x$ to its right child. This operation takes amortized $\Oh(\log n)$ time, because the splay operation takes amortized $\Oh(\log n)$ time, while removing the edge takes $\Oh(1)$ time.

\paragraph*{Join operation on splay trees} Given two splay trees $t_1$ and $t_2$, we can combine them into a single tree. We have to splay the rightmost element of $t_1$ ($\max(t_1)$) (if we assume that every element of $t_1$ is smaller than the minimum of $t_2$) and then create an edge between $\max(t_1)$ and the root of $t_2$. This operation also takes amortized $\Oh(\log n)$ time (one splay operation and one edge creation). 

\section{Comparison to FST in Giuliani et al.} 

In~\cite{GiulianiLMR21}, the splay-tree based data structure was used exclusively for transpositions, and other possible operations were not discussed. Moreover, only one specific type of transposition was used: that of two contiguous elements: $\pi' = (\pi(i), \pi(i+1)) \cdot \pi$. More precisely, the authors always applied the operation $\pi' = (1, \pi(i+1)) \cdot \pi$, because in the particular application, in the given permutation $\pi_i$, 1 was always in position $i$. Another peculiarity of the application is that right after the $i$th transposition is performed, value $i+1$ is always the rightmost element of its splay tree. 

The specialized version of the \FST\ in~\cite{GiulianiLMR21} furthermore did not have the information about subtree sizes that is key to multiple operations supported in our current version. 

\end{document}